\newtheorem{theorem}{Theorem}
\newcolumntype{C}{>{\centering\arraybackslash}X} 
\begin{document}

\title{ Resilient Autonomous Control of Distributed Multi-agent Systems in Contested Environments}

\author {Rohollah Moghadam, ~\IEEEmembership{Student Member,~IEEE}, {Hamidreza Modares,~\IEEEmembership{Member,~IEEE}} \\
\thanks{ Rohollah Moghadam and Hamidreza Modares are with the Department
of Electrical and Computer Engineering, Missouri University of Science and Technology,
Emerson Hall, 301 W 16th St, Rolla, MO 65409 USA (e-mails: moghadamr@mst.edu, modaresh@mst.edu).}
\thanks{Manuscript received December 30, 2017;}}


\maketitle

\begin{abstract}
An autonomous and resilient controller is proposed for leader-follower multi-agent systems under uncertainties and cyber-physical attacks. The leader is assumed non-autonomous with a nonzero control input, which allows changing the team behavior or mission in response to environmental changes. A resilient learning-based control protocol is presented to find optimal solutions to the synchronization problem in the presence of attacks and system dynamic uncertainties. An observer-based distributed H${_\infty}$ controller is first designed to prevent propagating the effects of attacks on sensors and actuators throughout the network, as well as to attenuate the effect of these attacks on the compromised agent itself. Non-homogeneous game algebraic Riccati equations are derived to solve the H$\infty$ optimal synchronization problem and off-policy reinforcement learning is utilized to learn their solution without requiring any knowledge of the agent's dynamics. A trust-confidence based distributed control protocol is then proposed to mitigate attacks that hijack the entire node and attacks on communication links. A confidence value is defined for each agent based solely on its local evidence. The proposed resilient reinforcement learning algorithm employs the confidence value of each agent to indicate the trustworthiness of its own information and broadcast it to its neighbors to put weights on the data they receive from it during and after learning. If the confidence value of an agent is low, it employs a trust mechanism to identify compromised agents and remove the data it receives from them from the learning process. Simulation results are provided to show the effectiveness of the proposed approach.   

\end{abstract}

\begin{IEEEkeywords}
Autonomous Controller, Resilient Controller, Reinforcement Learning, Multi-agent System, Distributed Control, H${_\infty}$ control.
\end{IEEEkeywords}

\section{Introduction}

The term autonomous is a combination of the Greek words auto (self) and nomous (law, rule) \cite{Antsaklis1991}. In system theory, an autonomous system is a system that is self-governing and does not explicitly depend on the independent variable. If the independent variable is time, these systems are also called time-invariant systems. On the other hand, in control theory, an autonomous control is a self-governing control system in the sense that it acts independently and does not rely on prior knowledge of the system dynamics and human intervention. An autonomous controller should be able to learn from what it perceives to compensate for partial or incorrect prior knowledge.

Autonomous control design for a multi-agent system (MAS) has gained significant interest due to applications in a variety of disciplines from robot swarms to power systems and wireless sensor networks. In a distributed MAS, decisions are made locally using only agents' available information. This provides scalability, flexibility and avoids a single point of failure \cite{Olfati2007,Jadbabaie2003,Fax2004,Ren2007}. However, designing autonomous controllers for MASs is challenging and requires learning from experience. Moreover, distributed MASs are prone to cyber-physical attacks. Due to their networked nature, attacks can escalate into disastrous consequences and significantly degrade the performance of the entire network \cite{Rohollah2017}. In a contested environment with adversarial inputs, corrupted data communicated by a single compromised agent can be propagated to the entire network through its neighbors. This corrupted data will be used by autonomous agents for learning which misleads the entire network and, consequently, causes no emergent behavior or an emergent misbehavior. The main bottleneck in deploying successful distributed MASs is designing secure control protocols that can learn about system uncertainties while showing some level of functionality in the presence of cyber-physical attacks.  

Reinforcement learning (RL) \cite{sutton1998reinforcement,powell2007approximate,lewis2012reinforcement}, inspired by learning mechanisms observed in mammals, has been successfully used to learn optimal solutions online in single agent systems for both regulation and tracking control problems \cite{Liu2014Decentralized,JIANG2012Computational,Luo2015Off,Dixon2017concurrent,Ni2013Adaptive,BahareTNNLS,vamvoudakis2012online} and recently for MASs \cite{ABOUHEAF20143038,Babu2008,Zuo2017LewisRL}. Existing RL-based controllers for leader-follower MASs assume that the leader is passive and without any control input. In this case, the leader is not able to react to environmental or mission changes by replanning its trajectories. On the other hand, existing active leader controllers (e.g., \cite{HONG20061177,Yang2018,Li2013TAC}) are not autonomous as they require having complete knowledge of the leader and agent's dynamics. Moreover, these approaches are generally far from optimal and only take into account the stability, which is the bare minimum requirement. Finally, existing learning-based RL solutions to MASs are not resilient against cyber-physical attacks.

Resilient control protocols for MASs have been designed  in the literature \cite{Pasqualetti2012consensus,LeBlanc2013IJSAC,Zhang2012AccRobustness,Teixeira2010,Amin2013TCST,Weerakkody2017,Feng2017ACC,Khazraei2017ACC,Sundaram2011,Zeng2014Resilient,LeBlanc2013resilient,Abhinav2017,RohollahIET} to mitigate attacks. Most of the existing approaches either use the discrepancy among the state of agents and their neighbors to detect and mitigate attacks, or use an exact model of agents to predict expected normal behavior and, thus, detect an abnormality caused by attacks. However, we will show that a stealthy attack on one agent can cause an emergent misbehavior in the network with no discrepancy between agent's states and, therefore, the former approaches cannot mitigate these type of attacks. Moreover, this discrepancy could be a result of a legitimate change in the state of the leader. Blindly rejecting the agent's neighbor information can harm the network connectivity and the convergence of the network. On the other hand, model-based approaches require a complete knowledge of the agent's dynamics, which may not be available in many practical applications and avoid the design of autonomous controllers. H$_\infty$ control protocols have also been proposed to attenuate the effect of disturbances in MASs \cite{Cao2010Optimal,Li2017ResilientAutonomous,JIAO2016Automatica}. However, as shown in this paper, standard H$_\infty$ control protocols can be misled and become entirely ineffective by a stealthy attack. To the author's knowledge, designing an autonomous and resilient controller that does not require any knowledge of the agent's dynamic and can survive against cyber-physical attacks has not been investigated yet.

This paper presents an autonomous and resilient distributed control protocol for leader-follower MASs with a non-autonomous leader. To alleviate the effects of attacks on the MASs, a distributed observer-based control protocol is first developed to prevent corrupted sensory data caused by attacks on sensors and actuators from propagating across the network. To this end, only the leader communicates its actual sensory information and other agents estimate the leaders' state using a distributed observer. To further improve resiliency, distributed H$_\infty$ control protocols are designed to attenuate the effect of the attacks on the compromised agent itself. Non-homogeneous game algebraic Riccati equations (ARE) are derived for solving the optimal H$_\infty$ synchronization problem for each agent. An off-policy RL algorithm is developed to learn the solutions of the non-homogeneous game ARE without requiring the complete knowledge of the agent's dynamics. To avoid the usage of corrupted data coming from compromised neighbors during and after learning, a trust-confidence based control protocol is developed for attacks on communication links and attacks that hijack the entire node. A confidence value is defined for each agent based solely on its local evidence. Then, each agent communicates its confidence value with neighbors to indicate the trustworthiness of its own information. Moreover, a trust value is defined for each neighbor to determine the significance of the incoming information. The agent incorporates these trust values along with the confidence values received from neighbors in its update law to eventually isolate the compromised agent.

\section{Preliminary}
In this section, a background of the graph theory is provided. A directed graph $\mathcal{G}$ consists of a pair $\left( {\mathcal{V},\mathcal{E}} \right)$ in which $\mathcal{V}{\text{ =  }}\{ {v_1}, \cdots ,{v_N}\}$ is a set of nodes and $\mathcal{E} \subseteq \mathcal{V} \times \mathcal{V}$ is a set of edges. The adjacency matrix is defined as $\mathcal{A} = \left[ {{a_{ij}}} \right]$, with ${a_{ij}} > 0$  if $({v_j},{v_i}) \in \mathcal{E}$, and ${a_{ij}} = 0$  otherwise. The set of nodes ${v_i}$ with edges incoming to node ${v_j}$ is called the neighbors of node ${v_i}$, namely ${\mathcal{N}_i} = \{ {v_j}:({v_j},{v_i}) \in \mathcal{E}\}$. The graph Laplacian matrix is defined as $\mathcal{L} = D - \mathcal{A}$, where $D = diag({d_i})$ is the in-degree matrix, with ${d_i} = \sum\nolimits_{j \in {N_i}} {{a_{ij}}} $ as the weighted in-degree of node ${v_i}$. A (directed) tree is a connected digraph that in-degree of every node is one, except the root node. A directed graph has a spanning tree if there exists a directed tree that connects all nodes of the graph. A leader can be pinned to multiple nodes, resulting in a diagonal pinning matrix $G = {diag}\left( {{b_i}} \right) \in {\mathbb{R}^{N \times N}}$ with the pinning gain ${b_i} > 0$ when the node has access to the leader node and ${b_i} = 0$, otherwise. ${{\mathbf{1}}_N}$ is the ${N}$-vector of ones and $\operatorname{Im} (R)$ denotes the range space of $R$. $\lambda_{min}(A)$ denotes the minimum eigenvalue of matrix A. $\left\| . \right\|$ denotes the Euclidean norm for vectors or the induced 2-norm for matrices. The notation $A \otimes B$ is Kronecker product of matrices $A$ and $B$.
\smallskip

\noindent
\textbf{Assumption 1.} The communication graph has a spanning tree, and the leader is pinned to at least one root node.

\section{Standard Synchronization Control Protocols and Their Vulnerability to attacks}
In this section, the standard synchronization control protocol for MASs is reviewed and its vulnerability to attacks is examined. 
Consider $N$ agents with identical dynamics given by
\begin{equation}\label{eq3}
{{\dot x}_i} = A{x_i} + B{u_i} + D{\omega _i}
\end{equation}
where ${x_i}(t) \in {\mathbb{R}^n}$ and ${u_i}(t) \in {\mathbb{R}^m}$ are the state and control input of agent $i$, respectively. ${\omega _i}(t)  \in {\mathbb{R}^d}$ denotes the attack signal injected into agent $i$. $A$, $B$, and $D$ are the drift, input, and attack dynamics, respectively.
\smallskip

\noindent
\textbf{Assumption 2.} The pair $(A,B)$ is stabilizable. 

Let the leader dynamics be non-autonomous, i.e., the control input of the leader is a nonzero signal, and is given by 
\begin{equation}\label{eq4}
{{\dot \zeta }_0} = A{\zeta _0}+Bv_0
\end{equation}
where ${\zeta _0}(t) \in {\mathbb{R}^n}$ and ${v_0}(t) \in {\mathbb{R}^m}$ denote the state and input of the leader, respectively. $A$ and $B$ are the same as other agents. 
\smallskip

\noindent
\textbf{Assumption 3.} The control input $v_0$ is given and bounded, i.e., there exists a positive constant $v_m$ such that $\left\| v_0\right\| \leq v_m$.

Define the tracking error for agent $i$ as
\begin{equation}\label{eq5}
{\varepsilon_i} = {x_i} - {\zeta_0}
\end{equation}
 
Define the local neighborhood tracking error ${e_i} \in {\mathbb{R}^n}$ for agent $i$ as  \cite{Zhang2011Optimal}
\begin{equation}\label{eq6}
{e_i} = \sum\limits_{j \in {N_i}} {{a_{ij}}({x_j} - {x_i}) + {b_i}({\zeta _0}-{x_i})} 
\end{equation}
where ${b_i} \geqslant 0$ is the pinning gain, and ${b_i} > 0$ for at least one root node $i$. The standard distributed tracking control protocol is then given by \cite{li2014cooperative}
\begin{equation}\label{Cnt_S}
{u_i} = c_1K{e_i}+c_2h(K{e_i})
\end{equation}
where $c_1$ and $c_2$ are positive scalar coupling gains, and $K$ is a design matrix gain. $h(.)$ is a nonlinear function defined for $x\in {\mathbb{R}}^n$ such that 
\begin{equation}\label{g_Func}
h(x) = 
\begin{cases}
    \frac{x}{\left\| x \right\|},& if \,\,{\left\| x \right\| \neq 0}\\
    0,          & if \,\, {\left\| x \right\| = 0}
\end{cases}
\end{equation}

It can be seen that the following condition is required to assure synchronization
\begin{equation}\label{eq8}
{\varepsilon_i} \to 0 \Leftrightarrow {x_i} \to {\zeta_0}
\end{equation}

\begin{theorem}\cite{li2014cooperative} Consider the agent dynamics \eqref{eq3}-\eqref{eq4} with  ${\omega _i}(t)=0$. Suppose Assumptions 1 and 3 hold. Then, agents synchronize to the leader, i.e., ${\varepsilon_i} \to 0 \,\, \forall i\in N$ under the controller \eqref{Cnt_S} with $c_1 \geq (1/\lambda_{min}(\mathcal{L}+G))$, $c_2 \geq v_m$ and $K=-B^TP^{-1}$, where $\lambda_{min}$ is the minimum eigenvalue of $(\mathcal{L}+G)$ and $P>0$ is a solution to the linear matrix inequality $AP+PA^T-2BB^T<0$. 
\end{theorem}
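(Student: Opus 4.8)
The plan is to pass to the global tracking error and drive it to zero with a single quadratic Lyapunov function whose decay is forced by the LMI on the linear feedback and by $c_2\ge v_m$ on the leader's bounded input. First I would differentiate \eqref{eq5} and use \eqref{eq3}--\eqref{eq4} with $\omega_i=0$ to obtain $\dot\varepsilon_i = A\varepsilon_i + Bu_i - Bv_0$, then stack the agents as
\begin{equation}
\dot\varepsilon = (I_N\otimes A)\varepsilon + (I_N\otimes B)u - (\mathbf{1}_N\otimes B)v_0. \nonumber
\end{equation}
Writing $Q:=\mathcal{L}+G$ and substituting \eqref{eq5} into \eqref{eq6} gives $e=-(Q\otimes I_n)\varepsilon$; under Assumption 1, $Q$ is nonsingular, so $e\to 0\Leftrightarrow\varepsilon\to 0$ and it suffices to drive $\varepsilon$ to zero. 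I would also record the structural identity $Q\mathbf{1}_N=[b_1,\dots,b_N]^T$, valid because the Laplacian rows sum to zero; this is the hinge that later rewrites the leader term in the $e_i$ coordinates.

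Next I would take $V=\tfrac12\,\varepsilon^T(Q\otimes P^{-1})\varepsilon$ when $Q$ is symmetric (for a general digraph, first replace $Q$ by $\Xi Q$, where $\Xi$ is the positive diagonal weight that makes $\Xi Q+Q^T\Xi$ positive definite, which exists since $Q$ is a nonsingular $M$-matrix), and split $\dot V$ along \eqref{Cnt_S} into a linear-feedback part, the nonlinear part from $c_2 h(\cdot)$, and the leader part from $-Bv_0$. For the linear part, substituting $e=-(Q\otimes I_n)\varepsilon$ and $K=-B^TP^{-1}$ and diagonalizing $Q=U\Lambda U^T$ decouples it into a weighted sum of modal terms governed by
\begin{equation}
P^{-1}A + A^TP^{-1} - 2c_1\lambda_i\,P^{-1}BB^TP^{-1}. \nonumber
\end{equation}
Conjugating the LMI $AP+PA^T-2BB^T<0$ by $P^{-1}$ yields $P^{-1}A+A^TP^{-1}-2P^{-1}BB^TP^{-1}<0$, and since $P^{-1}BB^TP^{-1}\ge 0$ each such matrix is negative definite once $c_1\lambda_i\ge 1$, i.e. $c_1\ge 1/\lambda_{\min}(Q)$; hence the linear part is bounded above by $-\alpha\|\varepsilon\|^2$ for some $\alpha>0$.

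It then remains to show the nonlinear and leader terms are jointly nonpositive. Using $K=-B^TP^{-1}$ I would evaluate the nonlinear term via the identity $e_j^TP^{-1}B=-(Ke_j)^T$, so that after collapsing $\varepsilon^T(Q\otimes P^{-1}B)$ onto the $e_j$ it contributes $-c_2\sum_j\|Ke_j\|$. For the leader term I would use $Q\mathbf{1}_N=b$ together with its consequence $b^TQ^{-1}=\mathbf{1}_N^T$ (in the symmetric case) to turn the pinning-weighted $\varepsilon$-expression into $\sum_j(Ke_j)^Tv_0$, which Cauchy--Schwarz and Assumption 3 bound by $v_m\sum_j\|Ke_j\|$. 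Adding the two gives $(v_m-c_2)\sum_j\|Ke_j\|\le 0$ whenever $c_2\ge v_m$, so $\dot V\le-\alpha\|\varepsilon\|^2$, and since $V$ is positive definite and quadratic this forces $\varepsilon\to 0$.

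The main obstacle is reconciling the leader-input term, which naturally sits in the $\varepsilon$-coordinates weighted by the pinning gains, with the robust term $c_2 h(Ke_i)$, which sits in the $e$-coordinates; the identities $Q\mathbf{1}_N=b$ and $b^TQ^{-1}=\mathbf{1}_N^T$ are precisely what bridge them, and this is where a general directed graph is hardest, since the symmetrizing weight $\Xi$ must then be carried consistently through both the modal estimate and this identity. A secondary point is that $h$ is discontinuous at the origin, so the argument is to be read for Filippov solutions (or through the upper Dini derivative of $V$); because the linear part already supplies the strict bound $-\alpha\|\varepsilon\|^2$, the nonsmooth term need only be shown nonpositive, and convergence follows without a separate LaSalle or Barbalat step.
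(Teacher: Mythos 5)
The paper itself does not prove this theorem---it is imported from \cite{li2014cooperative}---so the only in-paper benchmark is the proof of the analogous observer result (Theorem 3, Appendix B). Your architecture is the standard one and essentially matches that proof: global error dynamics, a graph-weighted quadratic Lyapunov function, a spectral bound on the linear feedback via the LMI, and the pairing of the discontinuous term $c_2h(Ke_i)$ against the bounded leader input through $(\mathcal{L}+G)\mathbf{1}_N=[b_1,\dots,b_N]^T$ and Cauchy--Schwarz. However, one step fails as literally written. With the paper's definitions $e=-(Q\otimes I_n)\varepsilon$ and $K=-B^TP^{-1}$, the linear part of the closed loop is
\begin{equation}
(I_N\otimes B)\,c_1(I_N\otimes K)e \;=\; +\,c_1\,(Q\otimes BB^TP^{-1})\,\varepsilon ,\nonumber
\end{equation}
i.e.\ \emph{positive} feedback: the modal matrices come out as $P^{-1}A+A^TP^{-1}+2c_1\lambda_i\,P^{-1}BB^TP^{-1}$, not with the minus sign you display, and the same flip makes the $h(\cdot)$ term contribute $+c_2\sum_j\|Ke_j\|$ rather than $-c_2\sum_j\|Ke_j\|$. (A one-node sanity check: $A=0$, $B=1$, $P=1$, one agent pinned to the leader gives $u_1=c_1(x_1-\zeta_0)$, which diverges.) The statement as printed is internally inconsistent---the cited reference orients the neighborhood error as $\sum_j a_{ij}(x_i-x_j)+b_i(x_i-\zeta_0)$, the opposite of \eqref{eq6}---and your algebra silently adopts the corrected convention. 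A blind proof must either flag that inconsistency or state which sign it is repairing; as written, your modal expression does not follow from the substitutions you name.

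A second, lesser gap is the directed-graph patch. Replacing $Q$ by $\Xi Q$ inside the quadratic form is not legitimate, since $\Xi Q$ is not symmetric: the effective weight is $\tfrac12(\Xi Q+Q^T\Xi)$, which does not commute with $Q$, so the modal decoupling and the bridging identity $b^TQ^{-1}=\mathbf{1}_N^T$ both break. The workable version---the one the paper uses for Theorem 3---places a \emph{diagonal} weight $\Phi=\mathrm{diag}\big((\mathcal{H}^T)^{-1}\mathbf{1}_N\big)$ on the quadratic form in the neighborhood-error coordinates and bounds the feedback term through $\lambda_{\min}(\Phi\mathcal{H}+\mathcal{H}^T\Phi)$, at the price of the gain condition becoming $c_1\ge\phi_{\max}/\lambda_{\min}(\Phi\mathcal{H}+\mathcal{H}^T\Phi)$ rather than $1/\lambda_{\min}(\mathcal{L}+G)$; the theorem's stated threshold is really only meaningful for symmetric $\mathcal{L}+G$. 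Restricted to that undirected case and with the sign corrected, your argument is sound, including the Filippov/Dini reading of the discontinuity at $Ke_i=0$ and the observation that the strict decay supplied by the linear part makes a LaSalle or Barbalat step unnecessary.
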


\noindent
\textbf{Definition 1.} In a graph, agent $i$ is \textbf{reachable} from agent $j$ if there exists a directed path of any length from node $j$ to node $i$.
\vspace{3pt}
 
\noindent
\textbf{Definition 2:} An agent is called a \textbf{disrupted/compromised} agent, if it is directly under attack. Otherwise, it is called an \textbf{intact agent}.
\smallskip

It is shown in \cite{li2014cooperative} for the proof of Theorem 1 that in the absence of attack, if the controller is designed to make the local neighborhood tracking error for all agents go to zero, the synchronization is guaranteed. In the following theorem, however, it is shown that even though the local neighborhood tracking \eqref{eq6} goes to zero for all agents, in the presence of a specific designed attack, it does not guarantee synchronization for intact agents that have a path to a compromised agent. Note that the leader is assumed to be a trusted agent with more advanced sensors and with higher security. Note also that the leader does not receive any information from other agents, which makes it secure against attacks on other agents and the communication network.
\vspace{3pt}

\noindent
\textbf{Lemma 1 \cite{lewis2013cooperative}.} Let $\Sigma$ be a diagonal matrix with at least one nonzero positive element, and $L$ be the Laplacian matrix. Then, $(L+\Sigma)$ is a nonsingular M-matrix. 
\begin{theorem}
Consider the MAS \eqref{eq3}-\eqref{eq4} with the control protocol \eqref{Cnt_S}. Assume that agent $i$ is under an attack $\omega_i$ that is generated by $\dot \omega_i = \Gamma \omega_i\,$, where the eigenvalues of $\,\Gamma$ are a subset of the eigenvalues of the agent's dynamic $A$. Then, intact agents that are reachable from agent $i$ do not synchronize to the leader, while their local neighborhood tracking error \eqref{eq6} is zero.
\end{theorem}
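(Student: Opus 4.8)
The plan is to exhibit a closed-loop trajectory on which the reachable intact agents drive their local neighborhood errors to zero yet settle onto a false reference that differs from the leader. First I would stack the agents and write the global synchronization error $\varepsilon = [\varepsilon_1^T,\dots,\varepsilon_N^T]^T$, obtaining the algebraic identity $e = -((\mathcal{L}+G)\otimes I_n)\,\varepsilon$ together with $\dot\varepsilon = (I_N\otimes A)\varepsilon + (I_N\otimes B)u + (I_N\otimes D)\omega - \mathbf{1}_N\otimes(Bv_0)$, where $\omega$ stacks the attack signals (zero on intact agents). By Lemma~1 and Assumption~1, $\mathcal{L}+G$ is a nonsingular M-matrix, so $e\equiv 0 \Leftrightarrow \varepsilon\equiv 0$. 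This is the guiding observation: a nonzero tracking error accompanied by vanishing neighborhood errors on the intact agents is possible only if the compromised agent retains a nonzero neighborhood error, which is exactly where the attack must hide.

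Next I would use the hypothesis that $\sigma(\Gamma)\subseteq\sigma(A)$ to construct the stealthy deviation. Writing $\omega_i(t)=e^{\Gamma t}\omega_i(0)$, the forcing $D\omega_i$ excites only modes already present in the agent dynamics, so there exists a nonzero signal $\delta(t)$ that is a genuine trajectory of the agents' dynamics and along which the compromised agent can be held at $x_i = \zeta_0 + \delta$. Concretely, I would show the attack can be chosen so that $D\omega_i$ offsets the control action $Bu_i$ on the compromised node (and the mismatch produced by the leader input $Bv_0$), keeping $x_i$ on the corrupted trajectory $\zeta_0+\delta$. The mode-matching condition is precisely what makes such a sustained, non-decaying $\delta$ realizable and consistent with the internal model of the agents, realizing a classical undetectable (zero-dynamics) attack: the deviation lives in the system's own modes and therefore produces no detectable growth in the neighborhood errors.

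Then I would propagate the corruption. Because the reachable intact agents lie on directed paths emanating from $i$, I would invoke the synchronization mechanism of Theorem~1 restricted to this subgraph -- the same Lyapunov/robust-control argument still contracts the neighborhood errors -- to conclude that each such agent $j$ reaches consensus on the corrupted value, i.e. $\varepsilon_j\to\delta$ and hence $e_j\to 0$, while $\delta\neq 0$ forces $\varepsilon_j\not\to 0$. Combining these, the reachable intact agents fail to synchronize to the leader although their local neighborhood tracking errors \eqref{eq6} vanish, which is the claim.

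The hard part will be the middle step: rigorously establishing that the mode-matched attack sustains a nonzero deviation that is simultaneously invariant, invisible to the intact agents' neighborhood errors, and transmitted to every reachable intact agent. This couples three difficulties -- proving undetectability from the spectral condition $\sigma(\Gamma)\subseteq\sigma(A)$, showing the attack can counter the control on the compromised node despite the discontinuous term $c_2 h(Ke_i)$ and the nonzero leader input $v_0$, and verifying that the false consensus is not pulled back toward the leader along the directed paths from $i$. I expect the spectral/internal-model argument for undetectability to be the crux.
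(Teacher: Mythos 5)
Your skeleton matches the paper's in outline: both write the closed loop globally, use Lemma~1 (nonsingularity of the M-matrix $\mathcal{L}+G$) to tie the stacked local neighborhood errors to the stacked tracking errors, and invoke the eigenvalue condition on $\Gamma$ to argue that the attack sustains a non-decaying deviation that hides in the compromised agent's own neighborhood error. But the step you explicitly defer as ``the hard part'' is the entire content of the paper's proof, and the mechanism you sketch for it is not the one that actually works. The paper closes the gap with a range-space argument: the forcing in the global closed loop vanishes iff $\bar\omega\in\operatorname{Im}(\bar c(\bar{\mathcal{L}}\otimes K))$, where $\bar c=c_1+c_2/\|(\bar{\mathcal{L}}\otimes K)x\|$ (this scalar absorption is also how the discontinuous term $c_2h(Ke_i)$ you worry about is handled); the eigenvalue condition guarantees a steady configuration $x_{ss}$ solving $\bar c(\bar{\mathcal{L}}\otimes K)x_{ss}=\bar\omega$, computed explicitly as $x_s=(\bar c(\mathcal{L}_f\otimes K))^{-1}(\omega+\bar c(\Delta\otimes K)\zeta_s)$ via the invertibility of $\mathcal{L}_f$; substituting back into $e=-(\bar{\mathcal{L}}\otimes I_n)x$ yields $\bar cKe_i=-\omega_i$, i.e., the attack is absorbed entirely by the \emph{compromised} agent's neighborhood error, so $e_j=0$ for every intact $j$ while the follower stack is displaced from the leader. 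Your proposal names this crux but does not supply the argument that resolves it.

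Two concrete defects in your version of the middle step. First, you propose that $D\omega_i$ cancels $Bu_i$ on node $i$ alone, holding $x_i=\zeta_0+\delta$, after which downstream agents ``reach consensus on the corrupted value'' $\zeta_0+\delta$. That is not the closed-loop equilibrium: the balance $\bar c(\bar{\mathcal{L}}\otimes K)x_{ss}=\bar\omega$ is a simultaneous condition on all followers, and the resulting offsets $\varepsilon_j$ come from inverting the nonsingular M-matrix against a vector supported on node $i$, so they differ from agent to agent and are nonzero precisely for the agents reachable from $i$ --- which is where the ``reachable'' qualifier in the statement comes from; a common-$\delta$ false consensus would not deliver it. Second, your concern that the corrupted configuration might be ``pulled back toward the leader along the directed paths'' dissolves once the equilibrium is seen to be global: at $x_{ss}$ the net forcing on every agent is zero, so there is nothing left to pull any agent back.
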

\begin{proof}
See Appendix A.
\end{proof}

\noindent
\textbf{Remark 1.} H$_\infty$ is one of the most common and effective approaches to attenuate disturbances. However, Theorem 2 implies that the standard H$_\infty$ controllers for MASs that use the exchange of relative information can be bypassed by the attacker. This is because although the goal of the H$_\infty$  is to attenuate the effect of adversarial input on the local neighborhood tracking error, Theorem 2 shows that the attacker can make the local neighborhood tracking error go to zero, while agents are far from synchronization. Therefore, a different controller framework and H$_\infty$ controller is presented in this paper that guarantees attenuating attacks on sensors and actuators of a compromised agent.

\section{The Proposed Attack Mitigation Approach}
In this section, the proposed resilient control approach is presented. First, a distributed observer-based H$_\infty$ control protocol is developed to not only prevent attacks on physical components, i.e., attacks on sensors and actuators (we call them \textbf{Type 1} attacks), from propagating throughout the network but also attenuate their effect on the compromised agent. Then, a trust-confidence based control protocol is examined to identify and isolate neighbors that are compromised by attacks on the communication network or attacks that take over the control of a compromised agent (we call them \textbf{Type 2} attacks). Figure \ref{fig:Fig4} shows the structure of the proposed control framework. 
\begin{figure}[!ht]
\begin{center}
\includegraphics[width=1\linewidth,height=2in]{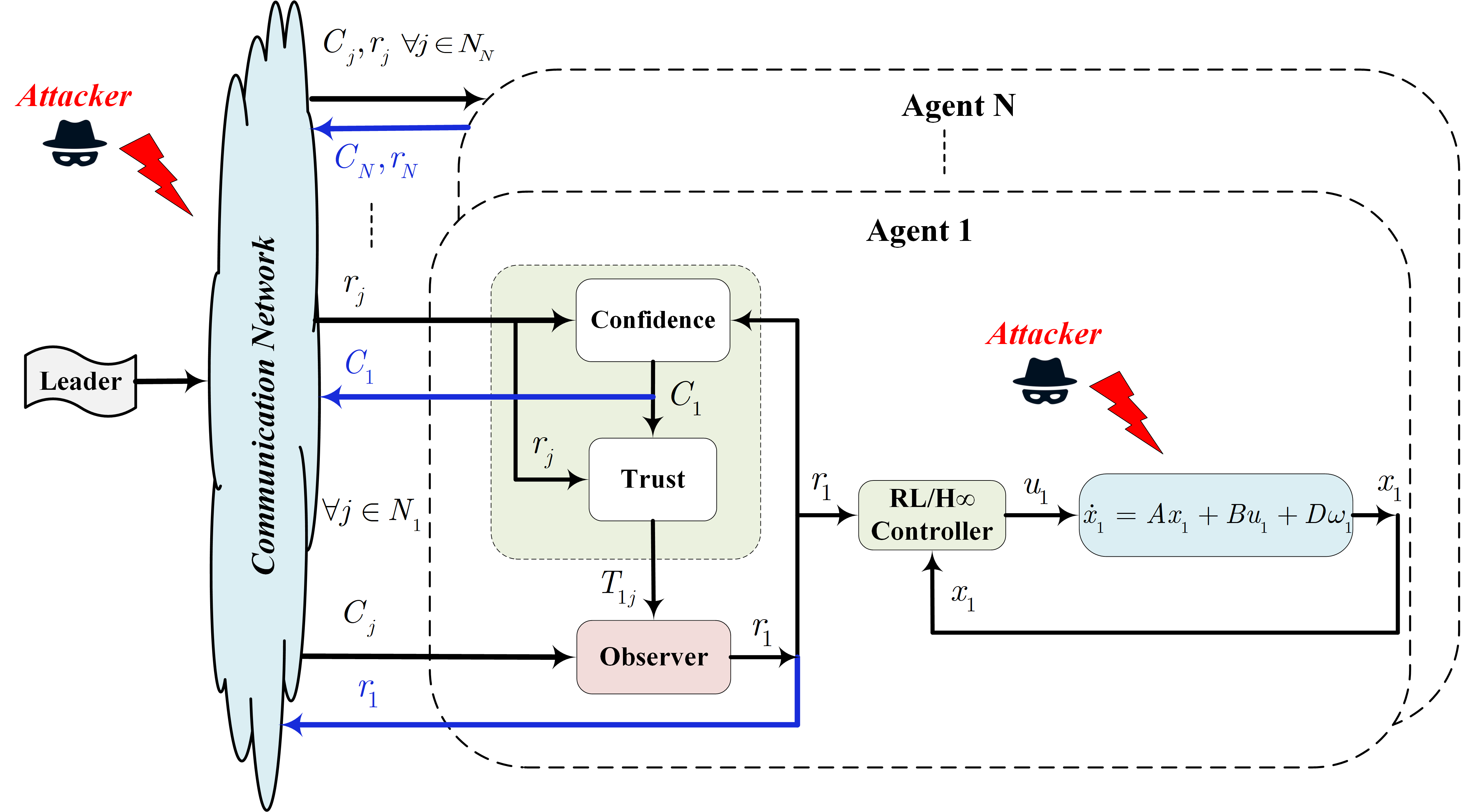}
\vspace{-5pt}
\caption{The proposed autonomous and secure distributed H${_\infty}$ controller for multi-agent systems. The observer and H${_\infty}$ controller allow mitigation of Type 1 attacks and designing autonomous controllers for agents. The trust-confidence mechanism mitigates Type 2 attacks. }\label{fig:Fig4}
\captionsetup{justification=centering}
\vspace{-20pt}
\end{center}
\end{figure}
\subsection{Overall structure of the proposed approach}
We now formulate a resilient observer-based H$_\infty$ distributed control protocol for MAS \eqref{eq3}-\eqref{eq4} in the presence of attacks. In the proposed approach, only the leader communicates its actual sensory information and followers do not exchange their actual state information. This stops propagating Type 1 attacks from a compromised agent to others. To this end, the followers estimate the leader's state using a distributed observer and communicate this estimation to their neighbors to achieve consensus on the leader state.

The distributed observer is designed as
\begin{equation}\label{Observ1} 
\left\{ \begin{gathered}
  {{\dot r}_i} = A{r_i} + B{\upsilon _i} \hfill \\
  {\upsilon _i} = cF{\eta _i} + \rho h(F{\eta _i}) \hfill \\ 
\end{gathered} \,\,\forall i=1,\dots,N  \right.
\end{equation}
where $h(.)$ is defined in \eqref{g_Func} and $\eta_i$ is a revisited local neighborhood observer tracking error for agent $i$ defined by 
\begin{equation}\label{ObsrvLNTE}
\eta_i={\sum\limits_{j \in {N_i}} {C_j(t)T_{ij}(t){a_{ij}}({r_j} - {r_i}) + {b_i}({\zeta _0} - {r_i})} }
\end{equation}
where $0\leq C_j(t) \leq 1$ is the confidence of agent $j$ and $0\leq T_{ij}(t) \leq 1$ is the trust value of agent $i$ to its neighbor $j$. The confidence and trust values along with the design parameters $c$, $\rho$ and $F$ in \eqref{Observ1} are designed in subsection B to mitigate Type 2 attacks, i.e., to identify and remove entirely compromised agents or attacks on communication network, and thus guarantee that $r_i\to \zeta_0$ for all intact agents, regardless of attacks. To further increase resiliency at the local level and attenuate the effects of Type 1 attacks on the compromised agent itself, the control input $u_i$ in \eqref{eq3} is designed as a function of $r_i$ and $x_i$ in subsection C (see Theorem 4) to guarantee that the following bounded ${L_2}$-gain condition is satisfied for the agent $i$
\begin{equation}\label{eq23} 
\frac{{\int_0^\infty  {{e^{ - {\alpha _i}t}}{{\left\| {{z_i}(t)} \right\|}^2}dt} }}{{\int_0^\infty  {{e^{ - {\alpha _i}t}}{{\left\| {{\omega _i}(t)} \right\|}^2}dt} }} \leq \gamma _i^2
\end{equation}
with ${z_i}$ defined as the controlled or performance output and is obtained by
\begin{equation}\label{eq22} 
{\left\| {{z_i}} \right\|^2} = {({x_i} - {\zeta_0})^T}{Q_i}({x_i}-{\zeta_0}) + u_i^T{R_i}{u_i}
\end{equation}
where ${\alpha _i}$ and ${\gamma _i}$ represent the discount factor and the attenuation level of the attack $\omega_i$, respectively, and the weight matrices ${Q_i}$ and ${R_i}$ are symmetric positive definite. If condition \eqref{eq23} is satisfied, then, the H${_\infty}$ norm of ${T_{{\omega _i},{z_i}}}$, i.e., the transfer function  from the attack ${\omega _i}$ to the performance output ${z_i}$, is less than or equal to ${\gamma _i}$. Note also that $\omega_i(t)$ does not need to be a bounded energy signal because of the discount factor ${\alpha _i}$. The problem formulation can now be given as follows.  
\smallskip

\noindent
\textbf{Problem 1. (Resilient H${_\infty}$ Synchronization Problem)} Consider $N$ agents defined in \eqref{eq3}-\eqref{eq4} with the distributed observer given by \eqref{Observ1}-\eqref{ObsrvLNTE}. Design the control protocol $u_i=f_i(x_i,r_i,v_i)$ in \eqref{eq3} along with $C_j(t)$, $T_{ij}(t)$, $c$, $\rho$ and $F$ in \eqref{Observ1}-\eqref{ObsrvLNTE} such that  
\begin{enumerate}
\item The bounded ${L_2}$-gain condition \eqref{eq23} is satisfied when ${\omega _i}\ne0$.  
\item The synchronization problem is solved, i.e., $\left\| {{x_i}(t) - {\zeta_0}(t)} \right\| \to 0$, $i = 1, \ldots ,N$ when ${\omega _i} = 0$.  
\end{enumerate}

\subsection{The proposed distributed observer design}
The distributed observer \eqref{Observ1}-\eqref{ObsrvLNTE} only communicates the observer state $r_i$, which cannot be affected by Type 1 attacks on physical components. A trust-confidence mechanism is designed in the following to mitigate Type 2 attacks. To this end, a confidence value is defined for each agent to indicate the trustworthiness of its own observer information. Agents communicate their confidence value with their neighbors to alert them to put less weight on the information they are receiving from them, depending on how low their level of confidence is. This slows down the propagation of Type 2 attacks. If an agent is not confident about its own observer information, it then assigns a trust value to its neighbors and incorporates these trust values along with the confidence values received from neighbors in its update law to determine the significance of the incoming information. Figure \ref{fig:Monitor} shows the block diagram of the proposed distributed monitor.

Note also that it is assumed that the attacker designs its signal based on Theorem 2 to deceive intact agents, so that they cannot monitor any anomaly by examining their local neighborhood tracking error. This is considered the worst attack scenario. If, however, the attacker does not satisfy the conditions of Theorem 2, then, intact agents can easily detect attacks by using Kullback–Leibler divergence criteria to check discrepancy between the normal statistical properties of the local neighborhood tracking error and its actual ones \cite{basseville1993detection}.
\begin{figure}[!ht]
\begin{center}
\includegraphics[width=3in,height=1.8in]{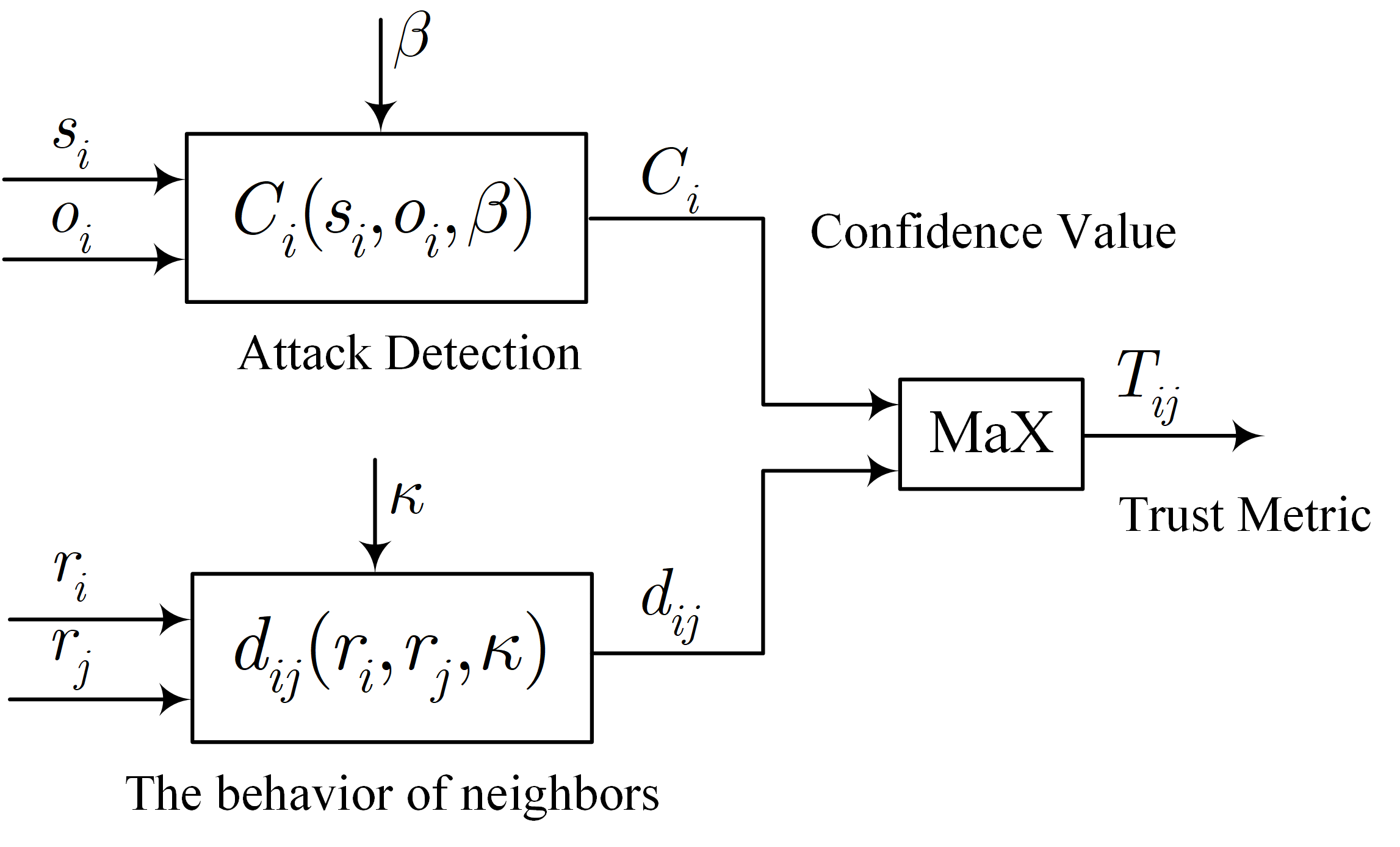}
\vspace{0pt}\caption{The distributed monitor structure.}\label{fig:Monitor}
\captionsetup{justification=centering}
\vspace{-10pt}
\end{center}
\end{figure}

\subsubsection{Confidence value}
A confidence value is defined for each agent which shows the validity of its information. To proceed, define
\begin{equation}\label{eq:o_mon}
{o_i} = \left\| \eta_i \right\|
\end{equation}
for agent $i$ where $\eta_i$ is defined in \eqref{ObsrvLNTE}. Based on Theorem 2, $\eta_i$ and, consequently, $o_i$ converges to zero for intact agents. Now, define
\begin{equation}\label{eq:s_mon}
{s_i} = \sum\limits_{j \in {N_i}} {{a_{ij}}\left\| {{r_j} - {r_i}} \right\| + {b_i}\left\| {{\zeta_0} - {r_i}} \right\|} 
\end{equation}
for agent $i$ . In contrast to $o_i$, $s_i$ does not converge to zero if agent $i$ is in the path of an attacker. This is because $s_i = 0$ requires $r_j=r_i=\zeta_0,\, \forall j \in {N_i}$, which indicates that agent $i$ and its neighbors are synchronized and, therefore, are not in the path of an attacker. In the absence of attack, $o_i$ and $s_i$ converge to zero and have the same behavior. Therefore, by comparing $o_i$ and $s_i$ one can detect whether or not the agent is in the path of a compromised agent. The confidence value $C_i(t)$ in \eqref{ObsrvLNTE} for an intact agent $i$ is defined as
\begin{equation}\label{eq:Confid}
{C_i}(t) = \beta \int_0^t {{e^{-\beta (t-\tau)}}{q_i}(\tau )d\tau }
\end{equation}
with
\begin{equation}\label{eq:Zi}
{q_i}(t) = \frac{{{\Delta _i}}}{{{\Delta _i} + \left\| {{s_i}(t) - {o_i}(t)} \right\|}}
\end{equation}
where $ \beta > 0$ is a discount factor used to determine how much we value the current experience with regard to the past experiences. ${{\Delta _i}}$ is a threshold value to account for factors other than attacks, i.e., channel fading and disturbance. If agent $i$ is not in the path of any compromised agent, $\left\| {{s_i} - {o_i}} \right\|$ is zero almost all the time and, consequently, $C_i$ is almost one. On the other hand, if agent $i$ is affected by an attacker, then $\left\| {{s_i} - {o_i}} \right\| \gg {\Delta _i}$, and $C_i$ is less than one and its value depends on how close the agent is to the source of the attack. Equation \eqref{eq:Confid} can be implemented by the differential equation ${{\dot C}_i}(t) + \beta {C_i}(t) = \beta {q_i}(t)$. The worst case scenario is assumed in which a disrupted agent broadcasts the confidence value 1 to its neighbors to fool them.
\subsubsection{Trust value}
The trust value is defined to determine the importance of the incoming information of each agent's neighbor. To calculate the trust value of agent $i$ to agent $j$ , we first measure the difference between the state of agent $j$ and the average of the state of all neighbors of agent $i$ using
\begin{equation}\label{eq:dij}
{d_{ij}}(t) = \kappa \int_0^t {{e^{-\kappa (t-\tau)}}{l_{ij}}(t)d\tau } 
\end{equation}
with
\begin{equation}\label{eq:lij}
{l_{ij}}(t) = \frac{{{\theta _i}}}{{\left\| {{r_j}(t) - h_i(t)} \right\| + {\theta _i}}},h_i = \frac{1}{{\left| {{N_i}} \right|}}\sum\limits_{j \in {N_i}} {{r_j}}
\end{equation}
where $h_i$ denotes the average value of the neighbors of agent $i$ and $\left| {{N_i}} \right|$ is the number of neighbors of agent $i$ . The discount factor $\kappa > 0$ determines how much we value the current experience of interaction with regard to the past experiences. $\theta_i$ is a threshold value to take into account factors other than attacks. Equation \eqref{eq:dij} can be implemented as ${{\dot d}_{ij}}(t) + \kappa {d_{ij}}(t) = \kappa {l_{ij}}(t)$. Now, we define the trust value of agent $i$ to its neighbor $j$ given as ${T_{ij}}(t)$ in \eqref{ObsrvLNTE} as
\begin{equation}\label{eq:TRust1}
{T_{ij}}(t) = \max \left( {{C_i}(t),{d_{ij}}(t)} \right)
\end{equation}

$T_{ij}$ can also be normalized to satisfy $\sum\nolimits_{j \in {N_i}} {{T_{ij}}(t)}  = 1$. If there is no attack and the network is also synchronized, then, $\left\| {{r_j} - {h_i}} \right\| \, \forall j \in {N_i}$ is zero, and, consequently, $T_{ij}$ is one $\forall j \in {N_i}$. Moreover, when there is no attack and agent $i$ receives considerably different values from its neighbors before synchronization, e.g. as a result of a change in the state of the  leader, since $C_i$ is close to one as there is no attack, $T_{ij}$ is almost one $\forall j \in {N_i}$. On the other hand, if agent $i$ is affected by an attack, then $C_i$ is small and the trust of agent $i$ to agent $j$ depends on $\left\| {{r_j} - {h_i}} \right\|,j \in {N_i}$.

It is shown in Theorem 3 that the proposed observer-based distributed control protocol \eqref{Observ1}-\eqref{ObsrvLNTE} guarantees synchronization of intact agents, if the following assumption is satisfied.
\smallskip

\noindent
\textbf{Assumption 4.} The network connectivity is at least $(2f+1)$, i.e., at least half of the neighbors of each agent are intact \cite{Pasqualetti2013}.
\smallskip

Define
\begin{equation}\label{HLG_eq} 
\mathcal{H}=(\mathcal{L}+G)
\end{equation}
where $\mathcal{L}$ is the graph Laplacian matrix and $G$ is the diagonal pinning matrix.Then, based on Lemma 1, $\mathcal{H}$ is a non-singular M-matrix. The following lemmas are used in the proof of Theorem 3.  
\smallskip

\noindent
\textbf{Lemma 2 \cite{Lewis2011TAC}.} Let $P> 0$ and Assumption 2 be satisfied. Then, the solution $P$ to the following ARE equation 
\begin{equation}\label{ARE_Observ1} 
A^TP+PA+I-PBB^TP=0,
\end{equation}
is positive definite.
\smallskip

\noindent
\textbf{Lemma 3 \cite{Li2015Lemma2}.} Let Assumption 1 be satisfied. Then, there exists a positive vector $\phi=[\phi_1,\dots,\phi_N]^T$ such that $(\Phi \mathcal{H}+\mathcal{H}^T\Phi) >0$ where $\Phi=diag\{\phi_1,\dots,\phi_N\}$, $\phi=(\mathcal{H}^T)^{-1}1_N$, and $\mathcal{H}$ is defined in \eqref{HLG_eq}. 

\begin{theorem}
Let Assumption 4 be satisfied. Consider the dynamic observer defined in \eqref{Observ1}-\eqref{ObsrvLNTE} with $C_j(t)$ given by \eqref{eq:Confid} and $T_{ij}(t)$ defined in \eqref{eq:TRust1}. Let $F= -B^T P$, $c \geqslant {{{\phi _{\max}}}}\big/{{{\lambda _{\min }}(\Phi \mathcal{H} + {\mathcal{H}^T}\Phi)}}$ and $\rho \geq v_{max}$, where $P$ is the solution to the algebraic Riccati equation \eqref{ARE_Observ1} and $\phi _{\max}=\mathop {\max }\limits_{i = 1,\dots,N} ({\phi _i})$. Then, $r_i \to \zeta_0$ for all intact agents.
\end{theorem}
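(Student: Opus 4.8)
The plan is to recast the statement as a synchronization result for the observer error $\tilde r_i := r_i - \zeta_0$ and to certify convergence with a $\phi$-weighted quadratic Lyapunov function. First I would subtract the leader dynamics \eqref{eq4} from the observer \eqref{Observ1} to obtain $\dot{\tilde r}_i = A\tilde r_i + B(\upsilon_i - v_0)$, and rewrite the neighborhood error \eqref{ObsrvLNTE} using $r_j - r_i = \tilde r_j - \tilde r_i$ and $\zeta_0 - r_i = -\tilde r_i$. The crucial preliminary reduction, where Assumption 4 enters, is to argue that the data-dependent weights $C_j(t)T_{ij}(t)$ asymptotically suppress the contribution of every compromised neighbor to $\eta_i$ of an intact agent: by \eqref{eq:lij}--\eqref{eq:TRust1}, a neighbor whose observer state $r_j$ deviates from the local average $h_i$ earns a small $d_{ij}$ and hence, once the agent's own confidence $C_i$ drops, a small trust $T_{ij}$; with at least half of each agent's neighbors intact, a compromised agent broadcasting $C_j = 1$ cannot outvote the intact majority. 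After this isolation, $\eta$ restricted to the intact agents takes the form $\eta = -(\mathcal H \otimes I_n)\tilde r$ with $\mathcal H$ in \eqref{HLG_eq} taken over the intact subgraph, which by Lemma 1 is still a nonsingular M-matrix, so Lemma 3 still furnishes a positive $\phi = (\mathcal H^T)^{-1}\mathbf 1_N$ with $\Phi\mathcal H + \mathcal H^T\Phi > 0$.

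Next I would stack the intact errors into $\tilde r$ and take $V = \tilde r^T(\Phi \otimes P)\tilde r = \sum_i \phi_i\, \tilde r_i^T P \tilde r_i$, with $P > 0$ the solution of the ARE \eqref{ARE_Observ1} guaranteed by Lemma 2 and $\Phi = \mathrm{diag}(\phi)$. Differentiating along the closed loop and substituting $F = -B^TP$ and $\eta = -(\mathcal H \otimes I_n)\tilde r$, the linear part of the injection $\upsilon_i = cF\eta_i + \rho\, h(F\eta_i)$ yields the quadratic form $\tilde r^T\big(\Phi \otimes (A^TP + PA)\big)\tilde r$ together with a coupling term carrying the factor $(\Phi\mathcal H + \mathcal H^T\Phi)\otimes PBB^TP$. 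Invoking the ARE \eqref{ARE_Observ1} to replace $A^TP + PA$ by $-I + PBB^TP$, the indefinite $PBB^TP$ contributions cancel and the linear part is bounded above by $-\lambda_{\min}(\Phi)\|\tilde r\|^2$ exactly when $c\,\lambda_{\min}(\Phi\mathcal H + \mathcal H^T\Phi) \geq \phi_{\max}$, which is the stated lower bound on $c$.

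The remaining and, I expect, hardest step is to dominate the non-autonomous leader input $v_0$ with the nonlinear injection $\rho\, h(F\eta_i)$; this is precisely where the active leader distinguishes the analysis from the passive-leader case. The cross terms collect into $2\sum_i \phi_i\, \tilde r_i^T P B\big(\rho\, h(F\eta_i) - v_0\big)$, and the goal is to show this is nonpositive. Using the identity $\phi^T\mathcal H = \mathbf 1_N^T$ from Lemma 3 to convert the $\Phi$-weighted sum into one aligned with the channels $F\eta_i$, together with the defining property $\xi^T h(\xi) = \|\xi\|$ of \eqref{g_Func} and the bound $\|v_0\| \leq v_m$ of Assumption 3, I would reduce these terms to a sum of the form $\sum_i \mu_i (v_m - \rho)\|F\eta_i\|$ with $\mu_i \geq 0$, which is nonpositive as soon as $\rho \geq v_{\max}$. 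The delicate points here are that $h$ is only piecewise defined at the origin and that each $F\eta_i$ couples several errors through $\mathcal H$, so the conversion of the weighting must be done carefully; this bookkeeping, rather than any single inequality, is what I expect to be the main obstacle.

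Combining the three contributions gives $\dot V \leq -\lambda_{\min}(\Phi)\|\tilde r\|^2 \leq 0$, strictly negative for $\tilde r \neq 0$, so $\tilde r \to 0$, i.e., $r_i \to \zeta_0$ for all intact agents. If one does not wish to assume the isolation is already complete, the same $V$ can be combined with a LaSalle-type argument showing that the confidence values of intact agents return toward one while the compromised contributions vanish in the limit, which recovers the identical conclusion.
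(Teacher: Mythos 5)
Your proposal follows essentially the same route as the paper's Appendix B: the same preliminary reduction via Assumption 4 and the trust--confidence weights to an intact subgraph on which Lemma 3 still applies, the same $\Phi\otimes P$-weighted quadratic Lyapunov function (you evaluate it on $\tilde r_i=r_i-\zeta_0$ where the paper evaluates it on the stacked $\eta$, an immaterial change of coordinates since $\eta$ and $\tilde r$ are related by the nonsingular map $\mathcal{H}\otimes I_n$), the same use of the ARE \eqref{ARE_Observ1} together with the bound $c\,\lambda_{\min}(\Phi\mathcal{H}+\mathcal{H}^T\Phi)\geq\phi_{\max}$ for the linear part, and the same domination of $v_0$ by $\rho\,h(F\eta_i)$ under $\rho\geq v_m$, concluding with LaSalle. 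The cross-term alignment you flag as the main obstacle is dispatched in the paper by exactly the inequality you anticipate, namely bounding $2\sum_{i,j}\alpha_{ij}\eta_j^TPB(\rho\psi_i-v_0)$ by $-(\rho-v_m)\sum_i\bigl\| B^TP\sum_j\alpha_{ij}\eta_j\bigr\|$, so no idea beyond what you describe is used there.
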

\begin{proof}
See Appendix B.
\end{proof}

\noindent
\textbf{Remark 2.} Note that for Type 1 attacks, the proposed control framework does not impose any constraints on the number of neighbors or the total number of agents under attacks. Note also that in contrast to existing mitigation approaches, we do not discard the neighbor's information for an agent based solely on the difference between their values. Therefore, when the discrepancy between agents is because of a legitimate change in the leader, the confidence and trust values for each agent become 1 and, consequently, all agents synchronize to the leader.

The following subsection shows how to design a resilient observer-based H$_\infty$ distributed controller. Non-homogeneous game AREs are derived for solving the optimal H$_\infty$ synchronization problem.
\subsection{The proposed resilient controller}
It was shown in Theorem 3 that $r_i \to \zeta_0$ for all intact agents regardless of attacks. Similar to \cite{MODARES2016Separ}, one can show that if $u_i(t)$ in \eqref{eq3} is designed to guarantee $x_i \to r_i$, using the separation principle, one can guarantee $x_i \to \zeta_0$. Therefore, in the following, the control input $u_i$ is designed to solve Problem 1 with $\zeta_0$ replaced with $r_i$.

Define the error between the state of agent $i$ and its observer as ${\epsilon_i}(t) = {x_i}(t) - {r_i}(t)$. The dynamic of the error becomes 
\begin{equation}\label{T5_1}
{{\dot \epsilon}_i}(t) = A{\epsilon_i}(t) + B{u_i}(t) + D{\omega _i}(t) - B{\upsilon _i}(t)
\end{equation}

Define the augmented system state as
\begin{equation}\label{eq26} 
{X_i}(t) = [\epsilon_i(t)^T\quad {r_i}{(t)}^T]^T \in {\mathbb{R}^{2n}}
\end{equation}

Using \eqref{T5_1} and \eqref{Observ1} together yields the augmented system as
\begin{equation}\label{eq27} 
{{\dot X}_i} = T{X_i} + {B_1}{u_i} + {D_1}{\omega _i} + {E_1}{\upsilon_i} 
\end{equation}
with
\begin{equation}\label{eq28} 
  T = \left[ {\begin{array}{*{20}{c}}
  A&0 \\ 
  0&A 
\end{array}} \right],{B_1} = \left[ {\begin{array}{*{20}{c}}
  B \\ 
  0 
\end{array}} \right],{D_1} = \left[ {\begin{array}{*{20}{c}}
  D \\ 
  0 
\end{array}} \right] 
  {E_1} = \left[ {\begin{array}{*{20}{c}}
  -B \\ 
  {B} 
\end{array}} \right],
\end{equation}

The performance output \eqref{eq22} in terms of the augmented state \eqref{eq26} (while $\zeta_0$ is replaced with $r_i$) becomes
\[{\left\| {{Z_i}} \right\|^2} = X_i^T{Q_i}{X_i} + u_i^T{R_i}{u_i}, \quad  {Q_i} = \left[ {\begin{array}{*{20}{c}}
  {{Q_{1i}}}&0 \\ 
  0&0 
\end{array}} \right]\]

Using \eqref{Observ1} and \eqref{eq26}, the control protocol for the augmented system can be written as the following non-homogeneous control input
\begin{equation}\label{eq32} 
{u_i} = {K_i}{X_i}+g_i
\end{equation}

Note that $g_i$ is used to compensate the non-homogeneous term ${\upsilon_i}$ in the augmented system \eqref{eq27}. With the aid of \eqref{eq23} (while $\zeta_0$ is replaced with $r_i$), define the discounted performance function in terms of the augmented system \eqref{eq27} as
\begin{equation}\label{eq37}
\begin{gathered}
  J({X_i},{u_i},{\omega _i}) = \hfill \\
  \int_{{\kern 1pt} t}^{{\kern 1pt} \infty } {{e^{ - {\alpha _i}\left( {\tau  - t} \right)}}} \left[ {X_i^T{Q_i}{X_i} + u_i^T{R_i}{u_i} - {\gamma _i}^2{\omega _i}^T{\omega _i}} \right]d\tau 
\end{gathered} 
\end{equation}

The value function for linear systems is quadratic with the form as
\begin{equation}\label{VF_eq} 
{V_i}({X_i}(t)) ={X_i}{(t)^T}{P_i}{X_i}(t)+2X_i^T\Pi_i+\Gamma_i
\end{equation}
and the corresponding Hamiltonian function becomes
\begin{equation}\label{eq39}
H\left( {{X_i},{u_i},{\omega _i}} \right) \triangleq X_i^T{Q_i}{X_i} + u_i^T{R_i}{u_i} - \gamma _i^2\omega _i^T{\omega _i} - {\alpha _i}{V_i}+\frac{dV_i}{dt}
\end{equation}

\noindent
\textbf{Remark 3.} It is assumed here that the full state of agents is available for measurement. However, if not available, the proposed design procedure can be extended for the case of dynamics controllers in which the states of agents are estimated using a local observer. This is because local observers can estimate agents' state without any exchange of information with their neighbors. On the other hand, if the entire agent is compromised and its state observer is manipulated, its neighbors detect it and discard its information using the proposed trust-confidence mechanism.
\begin{theorem}(Non-homogenous game ARE) The optimal solution for the discounted performance function \eqref{eq37} is
\begin{equation}\label{eq41}
  u_i^* =  - {R_i}^{ - 1}B_1^T({P_i}{X_i} + {\Pi _i}) \hfill \\
\end{equation}
where $P_i$ and $\Pi_i$ are the solution of the following non-homogeneous game ARE
\begin{equation}\label{nonhomo_eq41}
\begin{cases}
\begin{gathered}
{P_i}T + {T^T}{P_i} - {\alpha _i}{P_i} - {P_i}{B_1}{R_i}^{ - 1}B_1^T{P_i}+\frac{1}{{\gamma _i^2}}{P_i}{D_1}D_1^T{P_i}+{Q_i} = 0 \hfill \\
\end{gathered} \\
{{\dot \Pi }_i} = \left( {{\alpha _i}I_{2n} + {P_i}{B_1}{R_i}^{ - 1}B_1^T - {T^T} - \frac{1}{{\gamma _i^2}}{P_i}{D_1}D_1^T} \right){\Pi _i} - {P_i}{E_1}{\upsilon _i}\\
{{\dot \Gamma }_i} = {\alpha _i}{\Gamma _i}+{\Pi_i^T}{B_1}{R_i}^{ - 1}B_1^T{\Pi _i}-\frac{1}{{\gamma _i^2}}{\Pi_i^T}{D_1}D_1^T{\Pi _i} - 2\upsilon _i^TE_1^T{\Pi _i}\\
\end{cases}
\end{equation}
\end{theorem}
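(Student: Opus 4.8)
The plan is to treat \eqref{eq37} as a two-player zero-sum differential game in which the controller $u_i$ minimizes and the adversarial input $\omega_i$ maximizes the discounted cost, and to characterize the resulting saddle point through the first-order stationarity conditions on the Hamiltonian \eqref{eq39}. The structural fact I would exploit throughout is that, since the value-function ansatz \eqref{VF_eq} is quadratic-plus-affine in $X_i$, its gradient is simply $\nabla_{X_i} V_i = 2(P_i X_i + \Pi_i)$, so both stationarity conditions collapse to linear equations in $u_i$ and $\omega_i$.

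First I would expand the total time derivative in \eqref{eq39}. Because $P_i$ is taken constant while $\Pi_i$ and $\Gamma_i$ carry the explicit time dependence forced by the non-homogeneous term $\upsilon_i$, I have $\tfrac{dV_i}{dt} = 2\dot X_i^T(P_i X_i + \Pi_i) + 2 X_i^T \dot\Pi_i + \dot\Gamma_i$, into which I substitute the augmented dynamics \eqref{eq27}. Setting $\partial H/\partial u_i = 0$ then gives $2 R_i u_i + 2 B_1^T(P_i X_i + \Pi_i) = 0$, i.e.\ exactly the claimed optimal control \eqref{eq41}, with $R_i > 0$ certifying that this is a minimizer. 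In parallel, $\partial H/\partial \omega_i = 0$ yields the worst-case attack $\omega_i^* = \tfrac{1}{\gamma_i^2} D_1^T(P_i X_i + \Pi_i)$, and $-\gamma_i^2 I < 0$ certifies that this is a maximizer, so the pair $(u_i^*,\omega_i^*)$ forms a saddle point.

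Next I would substitute $u_i^*$ and $\omega_i^*$ back into $H$ and impose the Bellman/HJI optimality condition $H(X_i,u_i^*,\omega_i^*)=0$. Writing $W_i = P_i X_i + \Pi_i$ and collecting the resulting scalar as a polynomial in $X_i$, I would match coefficients degree by degree, using that $X_i$ is arbitrary: the quadratic terms $X_i^T(\cdot)X_i$ must vanish, which reproduces the algebraic Riccati equation for $P_i$; the terms linear in $X_i$ must vanish, which yields the differential equation for $\Pi_i$ (the $-P_i E_1 \upsilon_i$ term appears here precisely as the cross term between the channel $E_1 \upsilon_i$ in $\dot X_i$ and the $P_i X_i$ part of the gradient); and the leftover constant terms give the differential equation for $\Gamma_i$. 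Together these three matched equations are exactly \eqref{nonhomo_eq41}.

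The main obstacle I anticipate is bookkeeping rather than conceptual: correctly routing the cross terms generated by the affine part $2\Pi_i$ of the gradient and by the non-homogeneous channel $E_1 \upsilon_i$, so that they land in the $\Pi_i$- and $\Gamma_i$-equations instead of contaminating the $P_i$-ARE. I would also track the discount term $-\alpha_i V_i$ carefully, since it must distribute across all three degrees, contributing $-\alpha_i P_i$ to the Riccati equation, $+\alpha_i \Pi_i$ to the $\Pi_i$-dynamics, and $+\alpha_i \Gamma_i$ to the $\Gamma_i$-dynamics. A subordinate point is to confirm that the ansatz \eqref{VF_eq} is admissible, i.e.\ that it solves the HJI equation exactly for this linear-quadratic-affine data; this holds because the matched-coefficient system \eqref{nonhomo_eq41} is simultaneously solvable, which closes the argument.
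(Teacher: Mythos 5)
Your proposal is correct and follows essentially the same route as the paper's proof in Appendix C: form the Hamiltonian with the quadratic-plus-affine value ansatz, obtain $u_i^*$ and $\omega_i^*$ from the stationarity conditions, substitute back, and match the quadratic, linear, and constant terms in $X_i$ to recover the three equations of \eqref{nonhomo_eq41}. The only addition beyond the paper is your explicit check of the second-order (saddle-point) conditions, which is a harmless refinement.
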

\begin{proof}
See Appendix C.
\end{proof}

The following theorem shows that the control protocol \eqref{eq41} along with \eqref{nonhomo_eq41} solve the H$_\infty$ synchronization problem.

\begin{theorem}
Consider the MAS \eqref{eq3}-\eqref{eq4} with the observer \eqref{Observ1}-\eqref{ObsrvLNTE}. Let the control input $u_i$ be given as \eqref{eq41}-\eqref{nonhomo_eq41}. Then, Problem 1 is solved. 
\end{theorem}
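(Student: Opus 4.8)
The plan is to verify the two requirements of Problem~1 separately, leaning on the separation principle together with Theorem~3. Since Theorem~3 already guarantees $r_i \to \zeta_0$ for all intact agents regardless of attacks, it suffices to analyze the augmented closed-loop system \eqref{eq27} driven by the optimal input $u_i^*$ of \eqref{eq41} and to establish: (i) the bounded $L_2$-gain \eqref{eq23} when $\omega_i \neq 0$, and (ii) $\epsilon_i = x_i - r_i \to 0$ when $\omega_i = 0$. The latter, combined with $r_i \to \zeta_0$ from Theorem~3, yields $x_i \to \zeta_0$, which is exactly the synchronization claim.

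For the $L_2$-gain condition I would use the value function $V_i$ in \eqref{VF_eq} as a discounted storage function. Differentiating $e^{-\alpha_i t} V_i(X_i(t))$ along \eqref{eq27} and substituting the Hamiltonian \eqref{eq39} gives $\frac{d}{dt}\big(e^{-\alpha_i t} V_i\big) = e^{-\alpha_i t}\big(H - \|Z_i\|^2 + \gamma_i^2 \|\omega_i\|^2\big)$, where $\|Z_i\|^2 = X_i^T Q_i X_i + u_i^T R_i u_i$. Completing the squares in both $u_i$ and $\omega_i$ and invoking the non-homogeneous game ARE \eqref{nonhomo_eq41} (which forces $H(X_i,u_i^*,\omega_i^*)=0$ at the saddle point, with $\omega_i^* = \tfrac{1}{\gamma_i^2} D_1^T(P_i X_i + \Pi_i)$) yields the identity $H(X_i,u_i^*,\omega_i) = -\gamma_i^2 \|\omega_i - \omega_i^*\|^2 \le 0$ along the optimal control. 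Hence $\frac{d}{dt}\big(e^{-\alpha_i t} V_i\big) \le e^{-\alpha_i t}\big(-\|Z_i\|^2 + \gamma_i^2\|\omega_i\|^2\big)$, and integrating from $0$ to $\infty$ with zero initial augmented state and the non-negativity of the discounted value along trajectories delivers \eqref{eq23}.

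For synchronization I set $\omega_i = 0$ and use $V_i$ as a Lyapunov certificate for the $\epsilon_i$-subsystem. With $u_i = u_i^*$ the same completed-square identity reduces to $\dot V_i = \alpha_i V_i - \|Z_i\|^2 - \gamma_i^2 \|\omega_i^*\|^2$, from which I would extract asymptotic stability of the $\epsilon_i$-dynamics. Here the positive definiteness of $P_i$ (in the spirit of Lemma~2) and of $Q_i$ on the $\epsilon_i$-block are the structural ingredients that pin the closed-loop drift $T + B_1 K_i$ to the left-half plane. In parallel, the non-homogeneous components $\Pi_i$ and $\Gamma_i$, forced by the bounded observer input $\upsilon_i$, must be shown to remain bounded through stability of the $\dot\Pi_i$ recursion in \eqref{nonhomo_eq41}, so that the persistent but bounded term $\upsilon_i$ does not obstruct $\epsilon_i \to 0$.

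The hard part will be step~(ii): the discount factor $\alpha_i$ enters the Lyapunov derivative with the destabilizing sign ($+\alpha_i V_i$), so the naive dissipation inequality only bounds growth rather than certifying decay. I expect to need a bound tying $\alpha_i$ to the detectability of $\big(T,\sqrt{Q_i}\big)$ (equivalently to $\lambda_{\min}(Q_i)$ relative to the closed-loop spectrum) in order to guarantee that $T + B_1 K_i$ is Hurwitz and therefore $\epsilon_i \to 0$, while carefully accounting for the forcing by $\upsilon_i$ carried through $\Pi_i$ and $\Gamma_i$. The $L_2$-gain part, by contrast, should follow routinely from the storage-function dissipation argument once the game ARE \eqref{nonhomo_eq41} is in hand.
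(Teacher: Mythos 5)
Your treatment of the $L_2$-gain condition is essentially identical to the paper's Appendix D: the completed-square identity $H(V_i^*,u_i^*,\omega_i)=-\gamma_i^2\|\omega_i-\omega_i^*\|^2\leq 0$ obtained from the game ARE, the factor $e^{-\alpha_i t}$ absorbed into a total derivative, integration using $V_i^*\geq 0$, and the separation principle to pass from $x_i\to r_i$ to $x_i\to\zeta_0$ via Theorem~3. For Part~2 (synchronization with $\omega_i=0$) the paper gives no argument at all---it states only that the proof is ``similar to'' the cited off-policy tracking reference---so the obstruction you flag, namely that the destabilizing term $+\alpha_i V_i$ means the dissipation inequality alone does not certify decay and one needs an upper bound on $\alpha_i$ tied to $Q_i$ and the closed-loop spectrum (plus boundedness of $\Pi_i$ under the bounded forcing $\upsilon_i$), is precisely the condition the cited work imposes and this paper leaves implicit; your proposal is therefore the same route, and if anything more candid about where the real work lies.
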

\begin{proof}
See Appendix D.
\end{proof}

\section{Model-Free Resilient Off-Policy RL for Solving Optimal Synchronization for Intact Agents}
In this section, an RL algorithm is proposed to solve Problem 1 on-line without requiring any knowledge of the agents' dynamics.

The off-policy RL allows separating the behavior policy from the target policy for both control input and attack. In order to find the optimal control \eqref{eq41} without the requirement of the knowledge of the system dynamics, the off-policy RL algorithm \cite{Modares2015Tracking} is used in this subsection. Off-policy algorithm has two separate stages. In the first stage,  an admissible policy is applied to the system and the system information is recorded over the time interval $T$. Then, in the second stage, without requiring any knowledge of the system dynamics, the information gathered in stage 1 is repeatedly used to find a sequence of updated policies $u_i^k$ and  $\omega_i^k$ converging to $u_i^*$ and $\omega_i^*$. To this end, the augmented system dynamics \eqref{eq27} is first written as 
\begin{equation}\label{eq71_1}
{{\dot X}_i} = {T^k}{X_i} + {B_1}\left( {{u_i} - u_i^k} \right) + {D_1}\left( {{\omega_i} - \omega_i^k} \right) + {E_1}{\upsilon_i} 
\end{equation}
where $X_i$ is defined in \eqref{eq26} and $u_i^k$ and $\omega_i^k$ denote the control and disturbance target policies in iteration $k$ to be updated. Now, using \eqref{VF_eq}, one has
\begin{equation}\label{eq71}
{V_i^k}({X_i}(t)) ={X_i}{(t)^T}{P_i^k}{X_i}(t)+2X_i^T\Pi_i^k+\Gamma_i^k
\end{equation}

In this case, the control policy and worst case attack signal \eqref{eq41_1} can be written as
\begin{equation}\label{off_eq41_1}
\begin{gathered}
  u_i^{k+1} =  - {R_i}^{ - 1}B_1^T({P_i^k}{X_i} + {\Pi _i^k}) \hfill \\
  \omega _i^{k+1} = \frac{1}{{\gamma _i^2}}D_1^T({P_i^k}{X_i} + {\Pi _i^k}) \hfill \\ 
\end{gathered}
\end{equation}

Taking time derivative of $V_i^k$ along the augmented system dynamic \eqref{eq71} yields
\begin{equation}\label{off_eq3}
\begin{gathered}
  \dot V_i^k({X_i}(t)) = {\alpha _i}\left( {{X_i}{{(t)}^T}P_i^k{X_i}(t) + 2X_i^T\Pi _i^k + \Gamma _i^k} \right) - X_i^T{Q_i}{X_i} \hfill \\
   - {(u_i^k)^T}{R_i}(u_i^k) - 2{(u_i^{k + 1})^T}{R_i}({u_i} - u_i^k) \hfill \\
   + \gamma _i^2{(\omega _i^k)^T}(\omega _i^k) + 2\gamma _i^2{(\omega _i^{k + 1})^T}({\omega _i} - \omega _i^k) \hfill \\ 
\end{gathered}
\end{equation}

The following off-policy integral RL Bellman equation is derived by multiplying both sides of \eqref{off_eq3} by $e^{-\alpha_i(\tau-t)}$ and integrating 
\begin{equation}\label{eq72}
\begin{gathered}
  {e^{ - {\alpha _i}T}}V_i^k({X_i}(t + T)) - V_i^k({X_i}(t)) \hfill \\
   =  - \int_t^{t + T} {{e^{ - {\alpha _i}(\tau  - t)}}\left( {X_i^T{Q_i}{X_i} + {{(u_i^k)}^T}{R_i}(u_i^k) - \gamma _i^2{{(\omega _i^k)}^T}(\omega _i^k)} \right)} d\tau  \hfill \\
   + \int_t^{t + T} {{e^{ - {\alpha _i}(\tau  - t)}}\left( {2\gamma _i^2{{(\omega _i^{k + 1})}^T}({\omega _i} - \omega _i^k) - 2{{(u_i^{k + 1})}^T}{R_i}({u_i} - u_i^k)} \right)} d\tau  \hfill \\ 
\end{gathered}
\end{equation}

The off-policy RL algorithm presented by iterating on \eqref{eq72} to solve the non-homogeneous game ARE, is listed in Algorithm 1.

The following theorem shows that using the proposed control framework, the learning mechanism is resilient against both Types 1 and 2 attacks.
\begin{theorem}
Consider the MAS \eqref{eq3}-\eqref{eq4} under the control protocol \eqref{eq41} with the observer dynamics given by \eqref{Observ1}-\eqref{ObsrvLNTE}. Let the off-policy Algorithm 1 be used to learn the gains in \eqref{eq41}. Then, Problem 1 is solved, if Assumptions 1 to 4 are satisfied.
\end{theorem}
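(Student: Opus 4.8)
The plan is to show that the model-free Algorithm 1 reproduces \emph{exactly} the iterates of the model-based policy iteration on the non-homogeneous game ARE \eqref{nonhomo_eq41}, and then to invoke Theorems 3 and 5 to conclude resilience and synchronization. The argument therefore splits into an \emph{equivalence} part, a \emph{convergence} part, and a \emph{resilience assembly} part.

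First I would establish that the off-policy Bellman equation \eqref{eq72} is an exact, model-free reformulation of one policy-evaluation/policy-improvement cycle of \eqref{nonhomo_eq41}. Since \eqref{eq72} was obtained by integrating \eqref{off_eq3} --- which itself follows from differentiating the value form \eqref{eq71} along the rewritten augmented dynamics \eqref{eq71_1} --- any triple $(P_i^k,\Pi_i^k,\Gamma_i^k)$ together with the next policies $(u_i^{k+1},\omega_i^{k+1})$ satisfying \eqref{eq72} along the recorded trajectories must coincide with the solution of the model-based recursion underlying \eqref{nonhomo_eq41} and with the improved policies \eqref{off_eq41_1}. Crucially, \eqref{eq72} contains the drift $T$, input $B_1$, and attack matrix $D_1$ only implicitly through the measured signals $X_i$, $u_i$, $\omega_i$, $\upsilon_i$; the only quantities that must be known are the design weights $Q_i$, $R_i$, $\gamma_i$. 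Under a persistence-of-excitation condition on the behavior policy, the linear system in the unknown entries of $P_i^k$, $\Pi_i^k$, $\Gamma_i^k$, $u_i^{k+1}$ and $\omega_i^{k+1}$ generated by sampling \eqref{eq72} over sufficiently many intervals has full rank, and hence a unique solution, so Algorithm 1 recovers the model-based iterate without any knowledge of the agent dynamics.

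Next I would invoke the convergence of the underlying scheme. Because the iteration defined by \eqref{off_eq41_1} and \eqref{eq72} is equivalent to Newton's method applied to the game ARE in \eqref{nonhomo_eq41}, starting from an admissible initial policy the iterates satisfy $P_i^k \to P_i$ and $\Pi_i^k \to \Pi_i$, where $(P_i,\Pi_i)$ is the stabilizing solution guaranteed by Theorem 4; consequently $u_i^k \to u_i^*$ of \eqref{eq41}. At convergence the learned controller is identical to the model-based $H_\infty$ controller, so by Theorem 5 it solves Problem 1 with $\zeta_0$ replaced by $r_i$: it enforces the bounded $L_2$-gain condition \eqref{eq23} when $\omega_i \neq 0$ and drives $x_i \to r_i$ when $\omega_i = 0$.

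Finally I would assemble resilience against both attack types. Theorem 3 guarantees, under Assumption 4 and the trust-confidence observer \eqref{Observ1}--\eqref{ObsrvLNTE}, that $r_i \to \zeta_0$ for every intact agent regardless of Type 2 attacks, and that only filtered, trustworthy observer data enter the learning signals $X_i$ used in \eqref{eq72}, so a hijacked neighbor cannot corrupt the equivalence and convergence established above. Combining $r_i \to \zeta_0$ with $x_i \to r_i$ through the separation principle cited before Theorem 5 yields $x_i \to \zeta_0$ for all intact agents when $\omega_i = 0$, while the $L_2$-gain bound \eqref{eq23} attenuates the effect of any Type 1 attack on the compromised agent itself. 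I expect the main obstacle to be the equivalence/uniqueness step: one must verify that sampling \eqref{eq72} \emph{under attack} still yields a full-rank, uniquely solvable regression for the unknowns --- that is, that the persistence of excitation of the behavior policy survives the adversarial input $\omega_i$ and the time-varying non-homogeneous terms $\Pi_i^k$, $\Gamma_i^k$ --- and that this holds simultaneously with the observer's data filtering so that learning and resilience do not interfere.
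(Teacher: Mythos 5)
Your proposal is correct and follows essentially the same route as the paper: the paper's proof simply cites \cite{Modares2015Tracking} for the claim that the off-policy iteration on \eqref{eq72} recovers the optimal controller of Theorems 4--5, and then reduces everything to showing $r_i \to \zeta_0$ via Theorem 3. You fill in the equivalence, rank/persistence-of-excitation, and convergence details that the paper delegates to that citation, but the logical skeleton --- RL convergence to \eqref{eq41}, Theorem 3 for the observer under attack, separation principle to conclude --- is identical.
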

\begin{proof}
Similar to \cite{Modares2015Tracking}, one can show that the off-policy RL Algorithm 1 solves Problem 1 in an optimal manner, as long as $r_i \to \zeta_0$. That is, for intact agents $x_i \to \zeta_0$ and for a compromised agent the $L_2$ condition \eqref{eq23} is satisfied. This boils down the proof to show that $r_i \to \zeta_0$ even if the system is under attack. On the other hand, Theorem 3 shows that $r_i \to \zeta_0$ regardless of attacks. This completes the proof.
\end{proof}

\noindent
\textbf{Remark 4.} In the proposed Algorithm 1, steps 4 to 9 for finding  the trust and confidence values are continually employed even after learning. However, once the optimal gain is found, the learning steps 10-12 are skipped, because the gains required for the control policy are computed and there is no need for further computation, unless another learning phase is initiated by any change in the agent dynamics. One might argue that the off-policy Algorithm 1 requires to measure the attack signal $\omega_i(t)$ which is restrictive. However, the off-policy Algorithm can learn about the worst-case attack signal using actual measurable disturbances (either applied intentionally or coming from nature), instead of measuring attack signals. Once it learned the worst-case scenario, it can attenuate attacks without measuring them.
\begin{algorithm}[!ht]
\caption{Autonomous Resilient Control Protocol (ARCP) for agent $i$}
\begin{algorithmic}[1]
\State \textbf{Procedure} ARCP
\State \textbf{Apply an admissible behavior policy to the agent}
\State \textbf{For i=1:N}
\State \hspace{0.5cm} \textbf{Use $r_j,\,\, j\in N_i$ to compute $s_i$ and $o_i$}
\State \hspace{0.5cm}\textbf{Compute $C_i$ as} \[{C_i}(t) = \beta \int_0^t {\frac{{{{\Delta _i}e^{ - \beta (t - \tau )}}}}{{{\Delta _i} + \left\| {{s_i}(\tau ) - {o_i}(\tau )} \right\|}}d\tau }\]
\State \hspace{0.5cm}\textbf{Communicate $C_i$ with agents $j\in N_i$}
\State \hspace{0.5cm}\textbf{For each agent $j \in N_i\,$, compute}
\[{d_{ij}}(t) = \kappa \int_0^t {\frac{{{{\theta _i}e^{ - \kappa (t - \tau )}}}}{{{\theta _i} + \left\| {{r_j}(\tau ) - \frac{1}{{\left| {{N_i}} \right|}}\sum\limits_{j \in {N_i}} {{r_j}} } \right\|}}d\tau } \]
\[T_{ij}(t)=max \left(C_i(t),d_{ij}(t)\right)\]
\State \hspace{0.5cm}\textbf{Use $C_j,j\in N_i$ and $T_{ij}$ to find}
\[{\eta_i} = {\sum\limits_{j \in {N_i}} {C_j(t)T_{ij}(t){a_{ij}}({r_j} - {r_i}) + {b_i}({\zeta_0} - {r_i})} }\]
\State \hspace{0.5cm}\textbf{Compute the observer state $r_i$ using update law \eqref{Observ1}}
\State \hspace{0.5cm}\textbf{Collect enough samples $X_i=[X_i;X_i(iT_s)]$ and $u_i$ at different sampling interval $T_s$}
\State \textbf{End for}
\State \textbf{Given $u_i^k$ and collected information solve the following Bellman equation}
\[\begin{gathered}
  {e^{ - {\alpha _i}T}}V_i^k({X_i}(t + T)) - V_i^k({X_i}(t)) =  \hfill \\
   - \int_t^{t + T} {{e^{ - {\alpha _i}(\tau  - t)}}\left( {X_i^T{Q_i}{X_i} + {{(u_i^k)}^T}{R_i}(u_i^k) - \gamma _i^2{{(\omega _i^k)}^T}(\omega _i^k)} \right)} d\tau  \hfill \\
   + 2\gamma _i^2\int_t^{t + T} {{e^{ - {\alpha _i}(\tau  - t)}}{{(\omega _i^{k + 1})}^T}({\omega _i} - \omega _i^k)d\tau }  \hfill \\
   - 2\int_t^{t + T} {{e^{ - {\alpha _i}(\tau  - t)}}{{(u_i^{k + 1})}^T}{R_i}({u_i} - u_i^k)d\tau }  \hfill \\ 
\end{gathered}\]
\State \textbf{Once the learning is done, replace the behavior policy with the control solution found using RL}
\end{algorithmic}
\end{algorithm}
\section{Simulation Results}
In this section, an example is provided to verify the effectiveness of the proposed control protocol. The communication graph is given in Fig. \ref{fig:Exam1}.

\begin{figure}[!ht]
\begin{center}
\includegraphics[width=2.3in,height=1.4in]{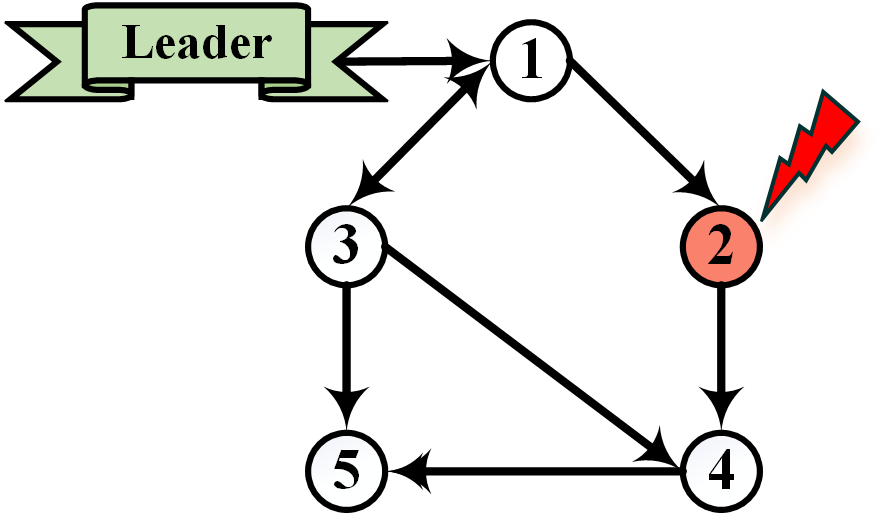}
\vspace{0pt}\caption{The communication graph among the agents.}\label{fig:Exam1}
\captionsetup{justification=centering}
\vspace{-10pt}
\end{center}
\end{figure}

Consider 5 agents with dynamics as
\begin{equation}\label{SR1}
 {{\dot x}_i} = \left[ {\begin{array}{*{20}{c}}  0&-4 \\
  { 1}&0 \end{array}} \right]{x_i} + \left[ {\begin{array}{*{20}{c}}  1 \\   0 
\end{array}} \right]{u_i}+ \left[ {\begin{array}{*{20}{c}}  1 \\   0 
\end{array}} \right]{\omega_i}
\end{equation}

The leader dynamics is given by
\begin{equation}\label{SR2}
{\dot x_0} = \left[ {\begin{array}{*{20}{c}}
  0&{ - 4} \\ 
  1&0 
\end{array}} \right]{x_0} + \left[ {\begin{array}{*{20}{c}}
  1 \\ 
  0 
\end{array}} \right]\left( {4{e^{ - 0.15t}}\sin (2t)} \right)
\end{equation}

The design parameters are $Q_{1i}=100I_2$, $R_i=1$, $\alpha_i=0.1$ and, $\gamma_i = 10$ for all agents. Now, assume that Agent 2 is affected by a Type 1 attack with the attack signal $\omega_2$ given as
\begin{equation}\label{eq78}
{\omega _2} = \left\{ \begin{array}{ll}
         10\sin (2t) & \mbox{$t \geq 10$}\\
         0 & \mbox{$otherwise$}\\
         \end{array} \right.
\end{equation}

\begin{figure}
    \centering
    \begin{subfigure}[b]{0.23\textwidth}
        \centering
    \includegraphics[width=1\linewidth,height=3.5cm]{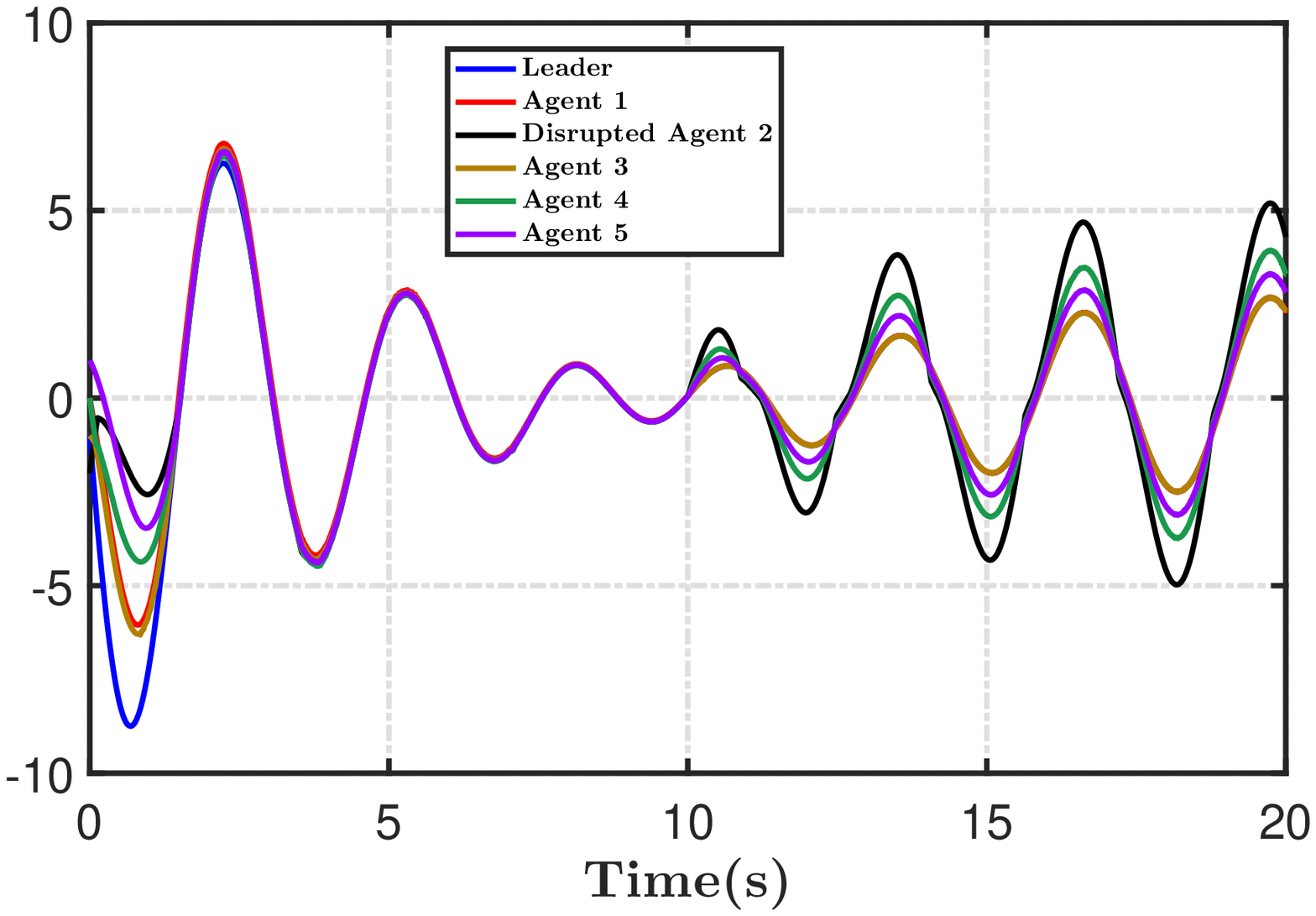}
        \caption{}
        \label{fig:Homo_Std_Attack}
    \end{subfigure}
    \hspace{0cm}
    \begin{subfigure}[b]{0.23\textwidth}
        \centering
        \includegraphics[width=1\linewidth,height=3.5cm]{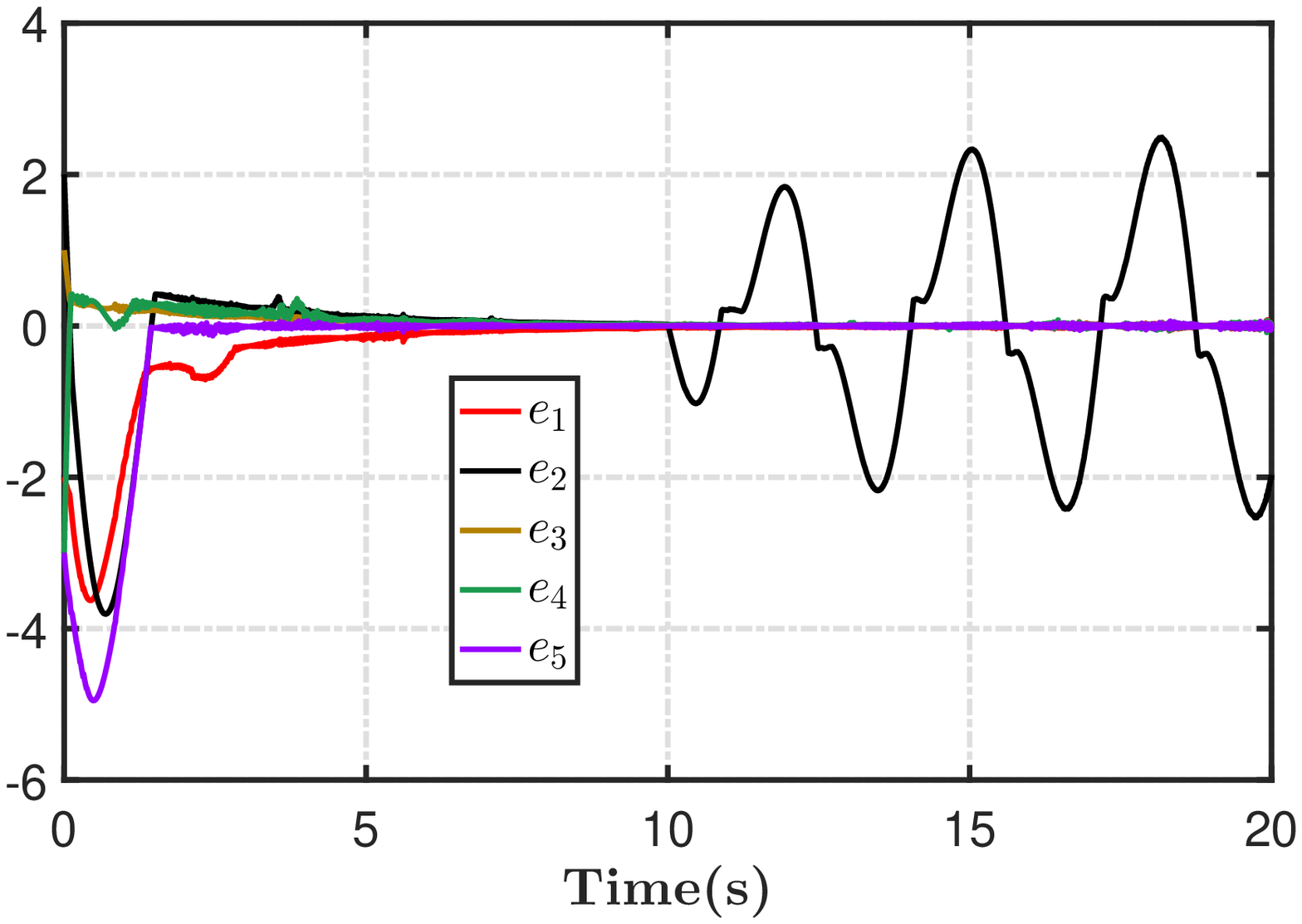}
        \caption{}
        \label{fig:Homo_Observer_Attack}
    \end{subfigure}
    \caption{Agents state, when Agent 2 is under the Type 1 attack \eqref{eq78}. (a) The standard control protocol \eqref{Cnt_S} is used. (b) The local neighborhood tracking error \eqref{eq6} for each agent.}
    \label{fig:Attack effect}
\end{figure}

\begin{figure}[!ht]
\begin{center}
\includegraphics[width=2.5in,height=1.5in]{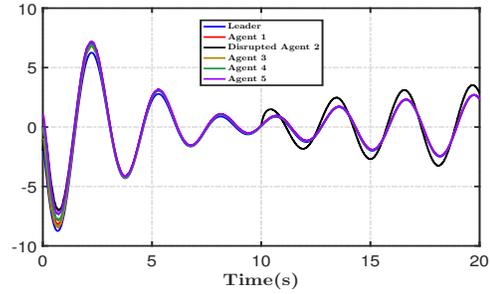}
\vspace{0pt}
\caption{The state of all agents when Agent 2 is under Type 1 attack and the observer defined in \eqref{Observ1} is used for each agent.}\label{fig:Observer_Flayer}
\captionsetup{justification=centering}
\vspace{0pt}
\end{center}
\end{figure}

The agents' states are shown in Fig. \ref{fig:Attack effect}. It is observed from Fig. \ref{fig:Homo_Std_Attack} that when the standard distributed controller \eqref{Cnt_S} is used, before attack all agents synchronize to the leader. However, after the attack, Agents 4 and 5, which have a path to the compromised Agent 2, do not synchronize to the leader. One can see from Fig. \ref{fig:Homo_Observer_Attack} that, as stated in Theorem 2, the local neighborhood tracking error \eqref{eq6} converges to zero for all intact agents except the compromised agent. 

The performance of the observer-based  H$_\infty$ controller \eqref{eq41} in the presence of Type 1 attack \eqref{eq78} is shown in Fig. \ref{fig:Observer_Flayer}. One can see that the compromised agent is the only agent that does not follow the leader. Moreover, the H$_\infty$ controller attenuates the effect of the attack on the disrupted agent, which can be seen by comparing the deviation level of the compromised agent state from its desired value in Figs. \ref{fig:Homo_Std_Attack} and \ref{fig:Observer_Flayer}.  

\begin{figure}
    \centering
    \begin{subfigure}[b]{0.23\textwidth}
        \centering
        \includegraphics[width=1\linewidth,height=3.2cm]{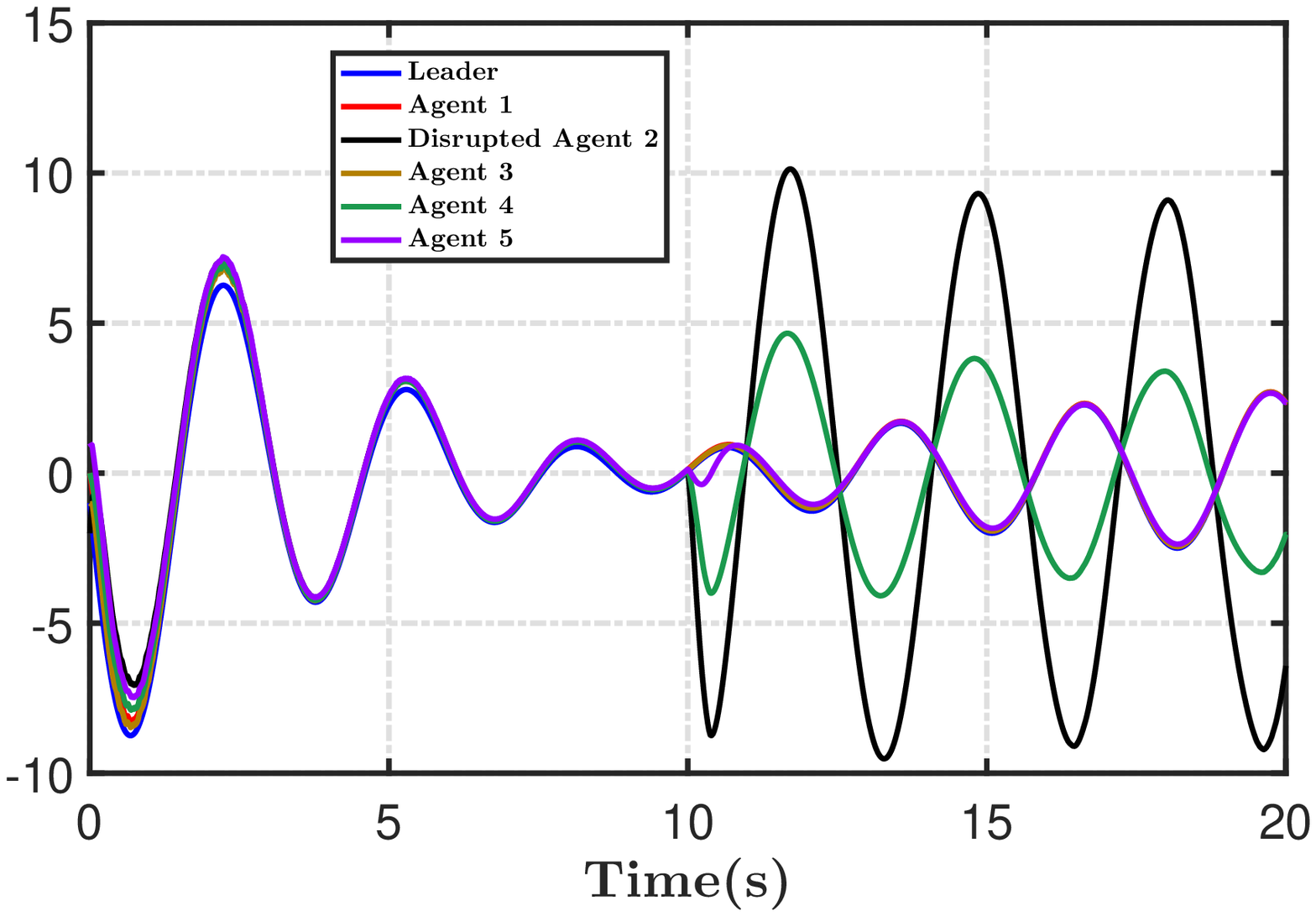}   
        \caption{}
        \label{fig:Obs_Attack}
    \end{subfigure}
   \hspace{0cm}
   \begin{subfigure}[b]{0.23\textwidth}
        \centering
        \includegraphics[width=1\linewidth,height=3.2cm]{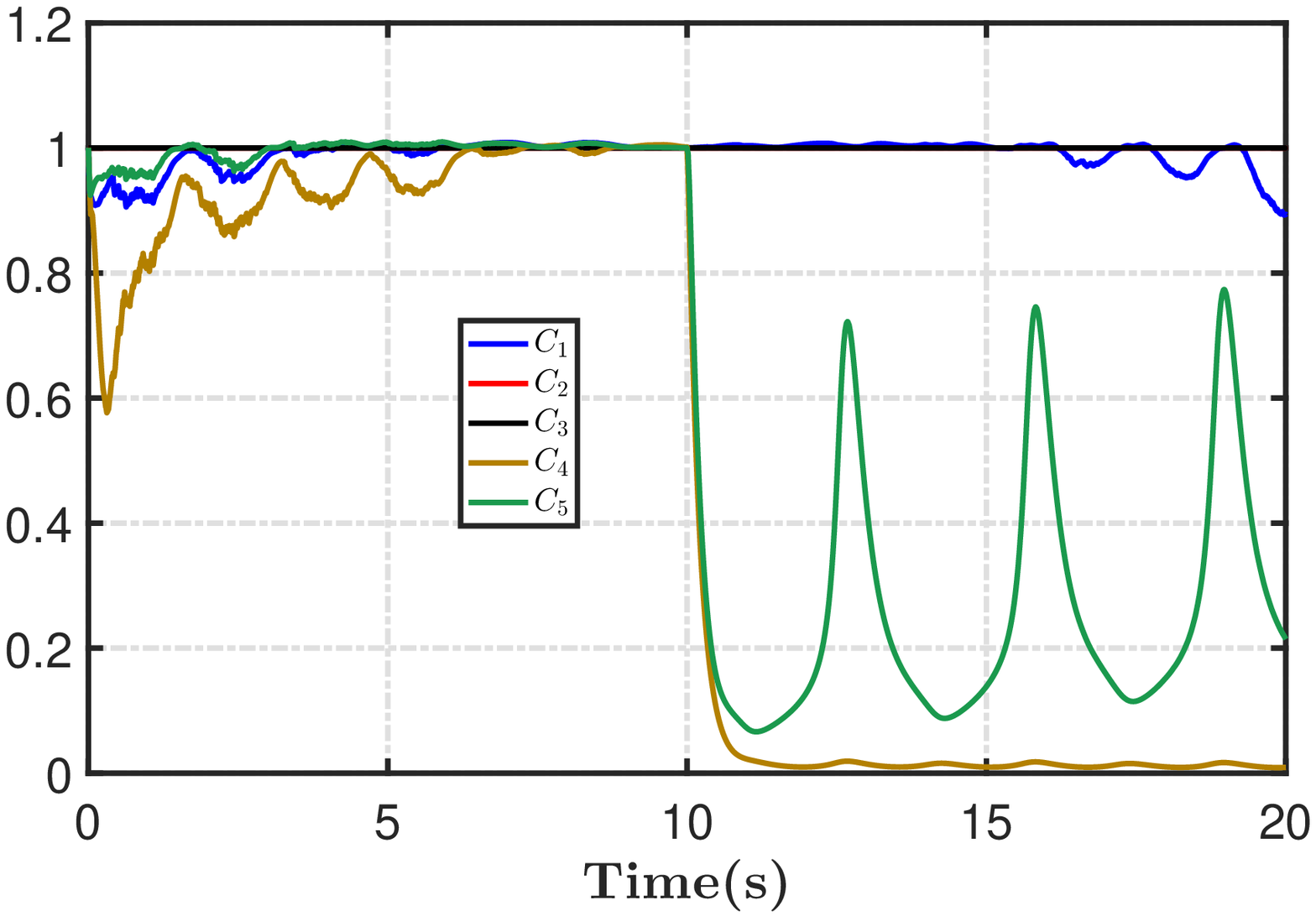}
        \caption{}
        \label{fig:Conf_value}
    \end{subfigure}
    \caption{Agent 2 is under Type 2 attack but Assumption 4 does not hold. (a) Agents state when the control protocol \eqref{eq41} with the observer \eqref{ObsrvLNTE} is used. (b) Confidence values of all agents.}
    \label{fig:TC1}
\end{figure}

Now, assume that Agent 2 is under Type 2 attack. In this case, the attack signal \eqref{eq78} is applied to the observer of Agent 2. However, Assumption 4 is not satisfied. The result is shown in Fig. \ref{fig:TC1}. It can be seen from Fig. \ref{fig:Obs_Attack} that using the trust-confidence mechanism, only the compromised agent and its direct neighbor do not synchronize to the leader. The confidence value of the agents is shown in Fig. \ref{fig:Conf_value}. One can see that Agents 4 and 5 are not confident about their own observer information, since they are in the path of the compromised agent. To satisfy Assumption 4, it is considered that 2 incoming links from Agent 5 and Agent 1 are connected to Agent 4. Fig. \ref{fig:LimEnerg} shows the agents output and confidence value when Assumption 4 is satisfied.  It can be seen from Fig. \ref{fig:TC_New} that only the compromised agent does not synchronize to the leader and all intact agents converge to the leader. The confidence value of the agents is shown in Fig. \ref{fig:CV_New}. One can see that Agent 4 is not confident about its own observer information since it is the only immediate neighbor of the compromised agent.
 
\begin{figure}
    \centering
    \begin{subfigure}[b]{0.23\textwidth}
        \centering
        \includegraphics[width=1\linewidth,height=3.2cm]{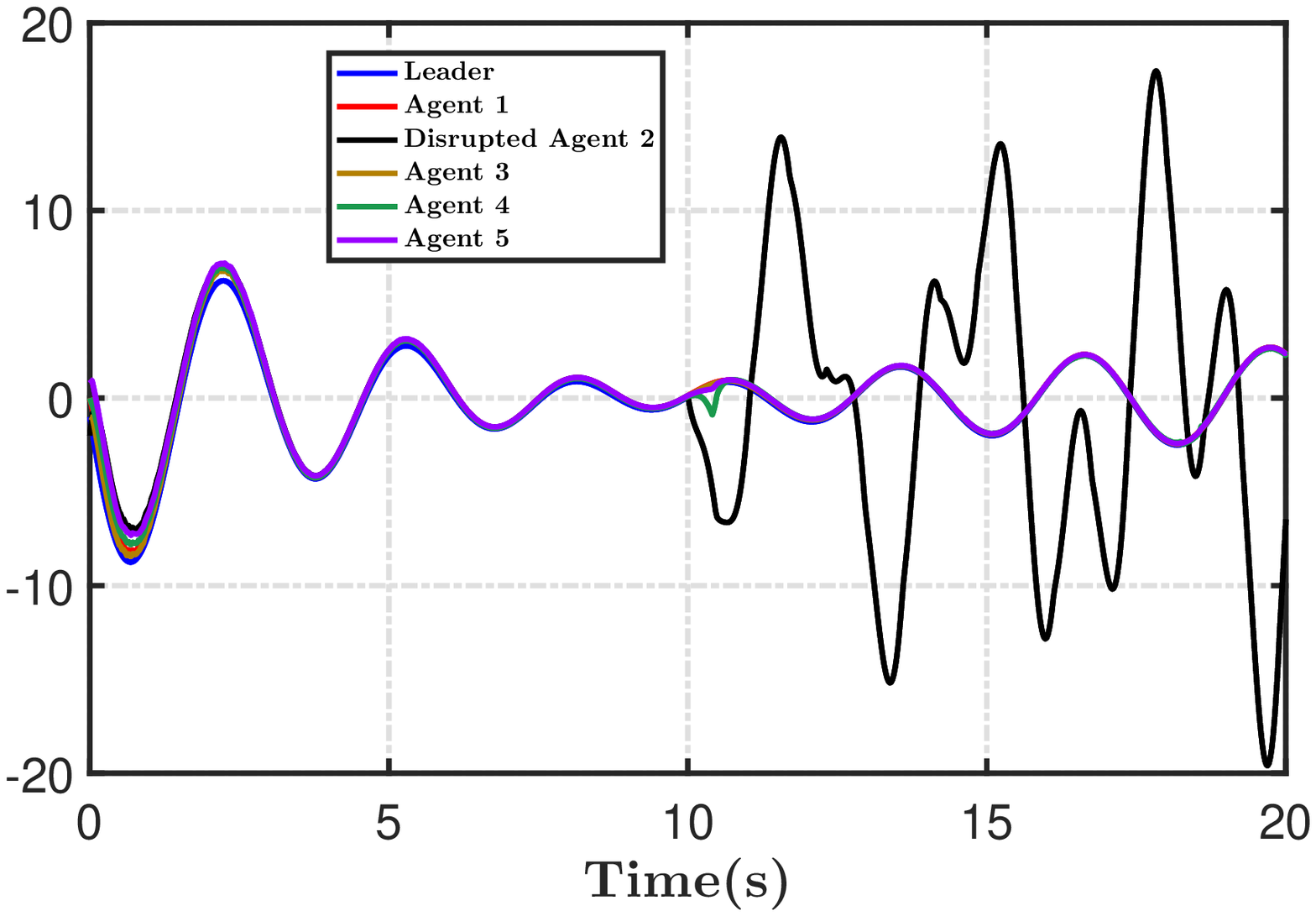}   
        \caption{}
        \label{fig:TC_New}
    \end{subfigure}
    \begin{subfigure}[b]{0.23\textwidth}
        \centering
        \includegraphics[width=1\linewidth,height=3.2cm]{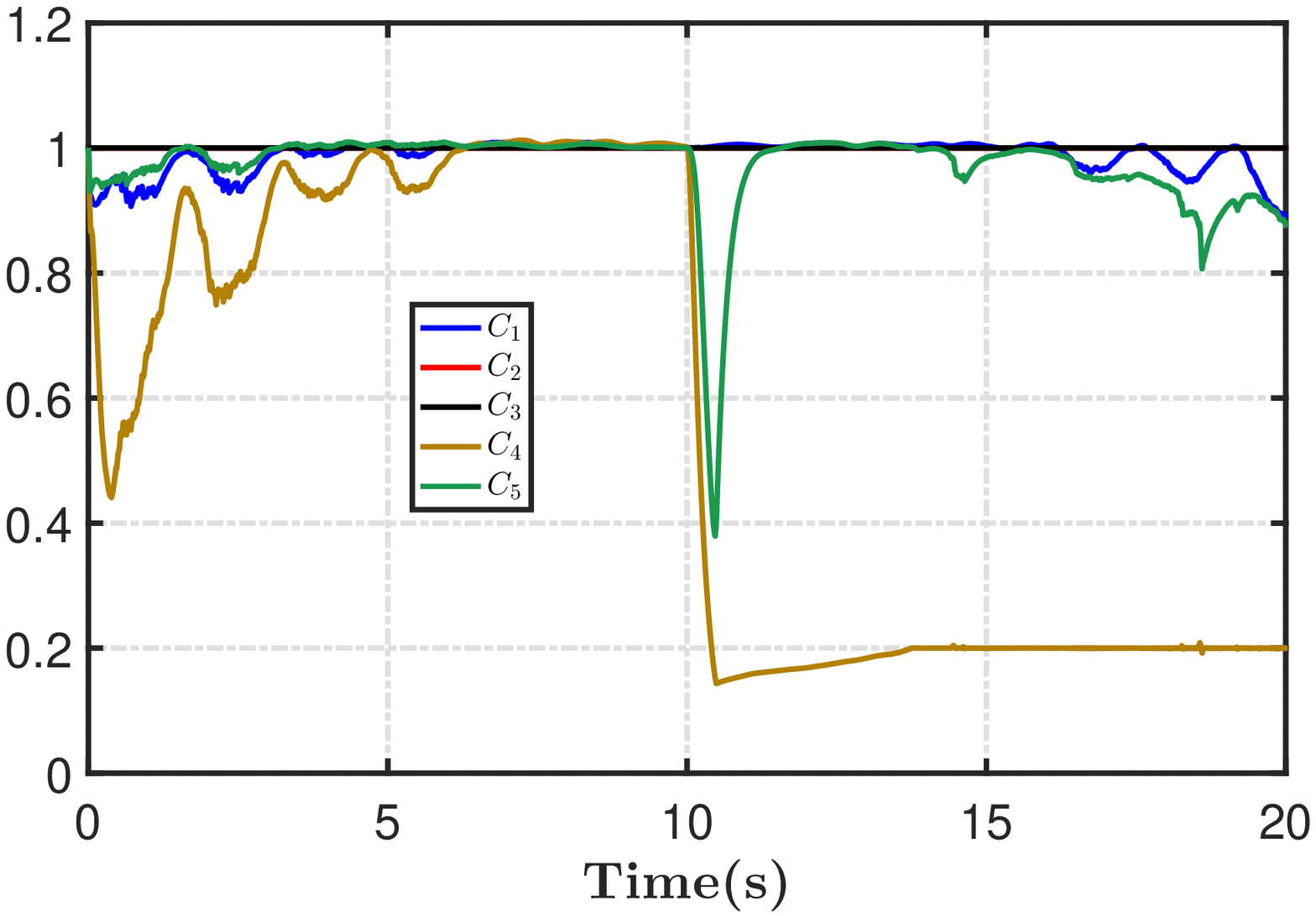}
        \caption{}
        \label{fig:CV_New}
    \end{subfigure}
    \caption{Agent 2 is under Type 2 attack and Assumption 4 holds. (a) Agents state when the control protocol \eqref{eq41} with the observer \eqref{ObsrvLNTE} is used. (b) Confidence values of all agents.}
    \label{fig:LimEnerg}
\end{figure}

\section{Conclusion}
A resilient autonomous control framework is proposed for a leader-follower MAS with an active leader. It is first shown that existing standard synchronization control protocols are prone to attacks. Then, a resilient learning-based control protocol is presented to find optimal solutions to the synchronization problem in the presence of attacks and system dynamic uncertainties. A distributed observer-based H${_\infty}$ controller is first designed to prevent propagating the effects of attacks on sensors and actuators throughout the network, as well as attenuating the effect of these attacks on the compromised agent itself. Non-homogeneous game algebraic Riccati equations are derived to solve the H$\infty$ optimal synchronization problem. Off-policy reinforcement learning is utilized to learn their solution without requiring any knowledge of the agent's dynamics. Then, a trust-confidence based distributed control protocol is proposed to mitigate attacks that hijack the entire node and attacks on communication links. It is shown that the proposed RL-based H${_\infty}$ control protocol is resilient against attacks.

\appendices 
\section{Proof of Theorem 2}

Let $\bar{\mathcal{L}}$ be the graph Laplacian matrix of the entire network, in which the leader is the only root node and $N$ followers are non-root nodes. Then, it can be partitioned as
\begin{equation}\label{eq:Lp}
\bar {\mathcal{L}} = 
\left[ 
\begin{array}{@{}c|c@{}}
   \begin{array}{@{}cccc@{}}
      0 \\
   \end{array} 
      & 0 \dots 0 \\
   \cmidrule[0.4pt]{1-2}
   -\Delta & \mathcal{L}_f\\
\end{array}
\right]\in {\mathbb{R}^{(N+1) \times (N+1)}}
\end{equation}
where $\Delta\in {\mathbb{R}^{N \times 1}}$ denotes a vector whose $i$-th element is nonzero and indicates that the follower $i$ is connected to the leader. $\mathcal{L}_f\in {\mathbb{R}^{N \times N}}$ indicates the interaction between the leader and the followers. Without loss of generality, assume $D=B$. Using \eqref{g_Func}, the global dynamic of the MAS \eqref{eq3} under attack with the control input \eqref{Cnt_S} in terms of the Laplacian matrix \eqref{eq:Lp} and after some manipulations, one has
\begin{equation}\label{M_sen1new1} 
\begin{gathered}
\dot x = \hfill \\
\begin{cases}
    \begin{gathered}
  \left( {{I_{N + 1}} \otimes A} \right)x + \left[ {\begin{array}{*{20}{c}}
  B \\ 
  {{0}} 
\end{array}} \right]{v_0}+ \left( {{I_{N + 1}} \otimes B} \right) \hfill \\
   \left( { - ({c_1} + {{{c_2}}}\Big/{\left\| {\left( {(\bar{\mathcal{L}} \otimes K} \right)x } \right\|})\bar{\mathcal{L}}  \otimes K + \bar \omega } \right) \hfill \\ 
\end{gathered}, &if {\left\| {\left( {(\bar{\mathcal{L}} \otimes K} \right)x } \right\| \neq 0}\\
   \begin{gathered}
  \left( {{I_{N + 1}} \otimes A} \right)x + \left[ {\begin{array}{*{20}{c}}
  B \\ 
  {{0}} 
\end{array}} \right]{v_0} \hfill \\
   + \left( {{I_{N + 1}} \otimes B} \right)\left( { - {c_1}\bar{\mathcal{L}}  \otimes K + \bar \omega } \right) \hfill \\ 
\end{gathered} ,& if {\left\| {\left( {(\bar{\mathcal{L}} \otimes K} \right)x } \right\| = 0}
\end{cases}
\end{gathered}
\end{equation}
where ${x}=[\zeta_0^T,x_1^T,\dots,x_N^T]^T$ and $\bar{\omega}=[0,\omega_1^T,\dots,\omega_N^T]^T$, and $0$ in $\bar{\omega}$ indicates that the leader is a trusted node and is not under attack. It can be seen that agents reach a steady state, i.e., $\dot x_i \to Ax_i+[B^T 0]^T v_0$, if the last terms of \eqref{M_sen1new1} tend to zero, i.e., $\bar{\omega}  \in \operatorname{Im} (\bar c(\bar{\mathcal{L}}  \otimes K))$, where $\bar {c} = ({c_1} + {{{c_2}}}/{\| {{(\bar{\mathcal{L}} \otimes K)}x } \|})$ or $\bar {c} = {c_1}$. Otherwise, since the attack signal has common eigenvalues with the agent dynamics, the agents' states go to infinity. In the latter case, the local neighborhood tracking error goes to zero as $x_i \to \infty$. To prove the former case, we first show that $\bar{\omega} \in \operatorname{Im} (\bar{c}(\bar{\mathcal{L}}  \otimes K))$, if the attack signal is designed as given in the statement of Theorem. Note that $\bar{\omega} \in \operatorname{Im} (\bar{c}(\bar{\mathcal{L}}  \otimes K))$, if there exists a nonzero vector $x_s$ such that
\begin{equation}\label{M_sen3} 
\bar c(\bar{\mathcal{L}}  \otimes K)x_{ss}=\bar{\omega}
\end{equation}

Define $x_{ss}=[\zeta_{s}^T,x_{s}^T]^T$, where $\zeta_{s}$ and $x_{s}$ denote the steady states of the leader and followers, respectively. Since the leader is not under attack, using \eqref{eq:Lp} and \eqref{M_sen3}, one has $\dot \zeta_{s} = A\zeta_0+Bv_0$ and for the followers,
\begin{equation}\label{M_sen44} 
  -\bar c(\Delta \otimes K) \zeta_{s} + \bar c({\mathcal{L}_f} \otimes K){x_{s}} = \omega
\end{equation}

Based on Assumption 1, the followers have at least one incoming link from the leader. On the other hand, ${\mathcal{L}_f}$ captures the interaction among all followers, as well as the incoming link from the leader. The former is a positive semi-definite Laplacian matrix and the latter is a diagonal matrix with at least one nonzero positive element added to it. Therefore, as stated in Lemma 1, ${\mathcal{L}_f}$ is nonsingular and, thus, the solution to \eqref{M_sen44} becomes
\begin{equation}\label{M_sen44S} 
  {x_{s}} = (\bar c({\mathcal{L}_f} \otimes K))^{-1}(\omega + \bar c(\Delta \otimes K) \zeta_{s})
\end{equation}

Since the eigenvalues of the attack signal are assumed a subset of the eigenvalues of the agent dynamics $A$, for every $\bar{\omega}$ there exists a nonzero vector $x_{ss}$ such that \eqref{M_sen3} holds. Therefore, $\bar{\omega} \in \operatorname{Im} (\bar c(\bar{\mathcal{L}}  \otimes K))$. Now, using \eqref{eq:Lp}, the global form of the state neighborhood tracking error \eqref{eq6} can be written as  
\begin{equation}\label{M_sen2}
e = -(\bar{\mathcal{L}} \otimes I_n)x
\end{equation}

Since \eqref{M_sen3} is satisfied, one has
\begin{equation}\label{M_sen3ss} 
\bar c(\bar{\mathcal{L}}  \otimes K)x_s =\bar{\omega} \Rightarrow \bar c(I_{N} \otimes K)e=-\bar{\omega} 
\end{equation}
or, equivalently
\begin{equation}\label{M_sen3ss1} 
\bar cKe_i=-{\omega_i} 
\end{equation}

Equation \eqref{M_sen3ss1} implies that the local neighborhood tracking error is zero, i.e., $e_j=0$ for intact agents that are not directly under attacks, i.e., ${\omega_j}=0 \,\forall j\ne i$. This completes the proof.

\section{Proof of Theorem 3}
Type 1 attacks cannot affect the observer state since the observer cannot be physically affected by an attacker. On the other hand for Type 2 attacks, based on Assumption 4, the total number of compromised agents is assumed less than half of the network connectivity, i.e., $2f+1$. Therefore, even if $f$ neighbors of an intact agent are attacked and collude to send the same value to misguide it, there still exists $f+1$ intact neighbors that communicate values different than the compromised ones. Thus, $r_j \ne r_i$ for some $j \in {N_i}$ and, therefore, although $o_i$ in \eqref{eq:o_mon} is zero based on Theorem 2, $s_i$ in \eqref{eq:s_mon} is nonzero and, consequently, its confidence value $C_i$ in \eqref{eq:Confid} will decrease and the attack will be detected. Moreover, since at least half of its neighbors are intact, it can update its trust values to remove the compromised neighbors. On the other hand, the entire network is still connected to the agent under attack and, therefore, the graph is still connected with the intact agents. Therefore, there exists a spanning tree in the graph associated with all intact agents. Let ${\mathcal{L}_I}(t)$ be the graph of remaining intact agents with $\bar{a_{ij}}=C_j(t)T_{ij}(t)a_{ij}$ as its weights. Since ${\mathcal{L}_I}(t)$ has a spanning tree, based on the above discussion, the inequality defined in Lemma 2 is still satisfied.

The global form of the observer \eqref{Observ1} can be written as
\begin{equation}\label{T3_GOV}
\dot{r}=(I_N\otimes A)r+(c(\mathcal{L}_I(t)+G) \otimes BF)\eta+(\rho I_N\otimes B)\Psi
\end{equation}
where $r=[r_1^T,\dots,r_N^T]^T$ denotes the global state vector of the agents observer and $\Psi=[h(F\eta_1),\dots,h(F\eta_N)]^T$. The global form of \eqref{ObsrvLNTE} is
\begin{equation}\label{T3_GNTE}
\eta = ((\mathcal{L}_I(t)+G)\otimes I_n)(r-{{{\underline{\zeta}}_0}})
\end{equation}
where $\eta = [\eta_1^T,\dots,\eta_N^T]^T$ and ${{{\underline{\zeta}}_0}}= (1_N\otimes I_n)\zeta_0$. Using \eqref{T3_GOV} and defining $\mathcal{H}_I(t)=\mathcal{L}_I(t)+G$, the dynamic of the tracking error \eqref{T3_GNTE} becomes
\begin{equation}\label{T3_GDTE}
\dot\eta = (I_N \otimes A)\eta+(c\mathcal{H}_I(t)\otimes BF)\eta+(\mathcal{H}_I(t)\otimes B)(\rho \Psi-\underline{v}_0)
\end{equation}
with $\underline{v}_0 = 1_N \otimes v_0$. Consider the Lyapunov candidate function as
\begin{equation}\label{T3_LF}
V(\eta) = \eta^T(\Phi \otimes P) \eta
\end{equation}
where $\Phi$ is defined in Lemma 3 and is a positive definite matrix. Using \eqref{T3_GDTE}, the time derivative of \eqref{T3_LF} yields
\begin{equation}\label{T3_DLF}
\begin{gathered}
  \dot V = \underbrace {2{\eta ^T}(\Phi  \otimes PA)\eta  + 2{\eta ^T}(c\Phi \mathcal{H}_I(t) \otimes PBF)\eta }_{{V_1}} \hfill \\
    + \underbrace{2{\eta ^T}(\Phi \mathcal{H}_I(t) \otimes PB)(\rho \Psi - \underline{v}_0)}_{{V_2}}
\end{gathered}
\end{equation}

Substituting $F= -B^T P$, $V_1$ can be expressed as 
\begin{equation}\label{T3_V1}
V_1 = \eta^T(\Phi \otimes (PA+A^TP))\eta-c\eta^T((\Phi\mathcal{H}_I(t)+\mathcal{H}_I^T(t)\Phi) \otimes PBB^TP)\eta
\end{equation}

Based on Lemma 3, $(\Phi\mathcal{H}_I(t)+\mathcal{H}_I^T(t)\Phi)>0$ and the inequality $- c(\Phi \mathcal{H}_I(t) + {\mathcal{H}_I^T(t) \Phi}) \leq  - c{\lambda _{min}}(\Phi \mathcal{H}_I(t) + {\mathcal{H}_I^T(t)}\Phi) \leq  - c{{{\lambda _{\min }}(\Phi \mathcal{H}_I(t) + {\mathcal{H}_I^T(t)} \Phi)}}\Phi\Big/{{{\phi _{\max }}}}$ holds \cite{Wan2017}. Using the Kronecker property $A\otimes (B+C)=A\otimes B+A\otimes C$ and ARE \eqref{ARE_Observ1}, \eqref{T3_V1} becomes
\begin{equation}\label{T3_V12}
\begin{gathered}
  {V_1} =\hfill \\ {\eta ^T}(\Phi  \otimes (PA + {A^T}P))\eta  - c{\eta ^T}((\Phi \mathcal{H}_I(t)+ {\mathcal{H}_I^T(t)}\Phi ) \otimes PB{B^T}P)\eta  \hfill \\
   \leq  - {\eta ^T}(\Phi  \otimes I_n)\eta  \hfill \\  +\left(1 - c\frac{{{\lambda _{\min }}(\Phi \mathcal{H}_I(t) + {\mathcal{H}_I^T(t)} \Phi)}}{{{\phi _{\max }}}} \right){\eta ^T}(\Phi  \otimes PB{B^T}P)\eta  \hfill \\ 
\end{gathered}
\end{equation}

Therefore, $V_1 \leq 0$ if $c \geq {{{\phi _{\max }}}}\big/{{{\lambda _{\min }}(\Phi \mathcal{H}_I(t) + {\mathcal{H}_I^T(t)}\Phi)}}$, which is satisfied if the condition of $c$ in the statement of the theorem is held. Now, for $V_2$ one has
\begin{equation}\label{T3_V2}
{V_2} = \sum\limits_{i = 1}^N {\sum\limits_{j = 1}^N {{\alpha _{ij}}\eta _j^TPB} } (\rho {\psi _i} - {v_0}) \leq -(\rho  - {v_M})\sum\limits_{i = 1}^N {\left\| {B^TP\sum\limits_{j = 1}^N {{\alpha _{ij}}\eta _j^T} } \right\|}
\end{equation}

This implies that $V_2 \leq 0$, if $\rho \geq v_M$. Therefore, $\dot V=V_1+V_2 \leq 0$, if the conditions defined in the statement of the theorem are satisfied. Since $\dot V(t) \leq 0$, $V(t)$ is bounded. Note that $\dot V(t)\equiv 0$ implies that $\eta=0$. Hence, using LaSalle’s invariance principle \cite{isidori1995nonlinear}, it follows that the observer state $r_i$ asymptotically converges to the leader state $\zeta_0$.  

\section{Proof of Theorem 4}
Using the value function \eqref{VF_eq} for the left-hand side of \eqref{eq37} and differentiating along with the augmented system \eqref{eq27} gives the following Bellman equation
\begin{equation}\label{eq39_1}
\begin{gathered}
  H\left( {{X_i},{u_i},{\omega _i}} \right) = 2X_i^T{P_i}(T{X_i} + {B_1}{u_i} + {D_1}{\omega _i} + {E_1}{\upsilon _i}) \hfill \\
   + 2{(T{X_i} + {B_1}{u_i} + {D_1}{\omega _i} + {E_1}{\upsilon _i})^T}{\Pi _i} + 2X_i^T{{\dot \Pi }_i} + {{\dot \Gamma }_i} \hfill \\
   - {\alpha _i}\left( {{X_i}{{(t)}^T}{P_i}{X_i}(t) + 2X_i^T{\Pi _i} + {\Gamma _i}} \right) \hfill \\
   + X_i^T{Q_i}{X_i} + u_i^T{R_i}{u_i} - \gamma _i^2\omega _i^T{\omega _i} = 0 \hfill \\ 
\end{gathered}
\end{equation}

By applying the stationary conditions \cite{lewis2012optimal} as ${{\partial {H_i}} \mathord{\left/
 {\vphantom {{\partial {H_i}} {\partial {u_i^*}}}} \right.
 \kern-\nulldelimiterspace} {\partial {u_i}}} = 0$, ${{\partial {H_i}} \mathord{\left/
 {\vphantom {{\partial {H_i}} {\partial {\omega _i^*}}}} \right.
 \kern-\nulldelimiterspace} {\partial {\omega _i}}} = 0$, the optimal control and the worst case attack signal can be written as
\begin{equation}\label{eq41_1}
\begin{gathered}
  u_i^* =  - {R_i}^{ - 1}B_1^T({P_i}{X_i} + {\Pi _i}) \hfill \\
  \omega _i^* = \frac{1}{{\gamma _i^2}}D_1^T({P_i}{X_i} + {\Pi _i}) \hfill \\ 
\end{gathered}
\end{equation}

Substituting \eqref{eq41_1} into Bellman equation \eqref{eq39_1} results in the following tracking game ARE
\begin{equation}\label{eq42}
\begin{gathered}
H\left( {{X_i},u_i^*,\omega _i^*} \right) = X_i^T{\left( {{P_i}T + {T^T}{P_i} - {\alpha _i}{P_i} - {P_i}{B_1}{R_i}^{ - 1}B_1^TP} \right._i} \hfill \\
\left. { + \frac{1}{{\gamma _i^2}}{P_i}{D_1}D_1^T{P_i} + {Q_i}} \right){X_i} + 2X_i^T\left( {{{\dot \Pi }_i} - {P_i}{B_1}{R_i}^{ - 1}B_1^T{\Pi _i} + } \right. \hfill \\
+ \left. {\frac{1}{{\gamma _i^2}}{P_i}{D_1}D_1^T{\Pi _i} + {P_i}{E_1}{\upsilon _i} - {\alpha _i}{\Pi _i} + {T^T}{\Pi _i}} \right) \hfill \\
+ \left( {{{\dot \Gamma }_i} - {\alpha _i}{\Gamma _i} - \Pi _i^T{B_1}R_i^{ - 1}B_1^T{\Pi _i} + \frac{1}{{\gamma _i^2}}\Pi _i^T{D_1}D_1^T{\Pi _i} + 2\upsilon _i^TE_1^T{\Pi _i}} \right)=0 \hfill \\ 
\end{gathered}
\end{equation}

Since \eqref{eq42} is satisfied for all $X_i$, this can occur if, and only if, \eqref{nonhomo_eq41} holds. This completes the proof.

\section{Proof of Theorem 5}
The Hamiltonian function \eqref{eq39} for the optimal value function $V_i^*$, and any control policy $u_i$ and disturbance policy $\omega_i$ become
\begin{equation}\label{T6_1}
\begin{gathered}
  H(V_i^*,{u_i},{d_i}) = X_i^T{Q_i}{X_i} + u_i^T{R_i}{u_i} - \gamma _i^2\omega _i^T{\omega _i} - {\alpha _i}V_i^* \hfill \\
   + V_{iX}^{*T}\left( {T{X_i} + {B_1}{u_i} + {D_1}{\omega _i} + {E_1}{\upsilon _i}} \right) \hfill \\ 
\end{gathered}
\end{equation}

Using \eqref{eq42}, one has $H(V_i^*,{u_i},{d_i}) = H(V_i^*,u_i^*,\omega _i^*) + {({u_i} - u_i^*)^T}{R_i}({u_i} - u_i^*)+{\gamma ^2}{({\omega _i} - \omega _i^*)^T}({\omega _i} - \omega _i^*)$ and based on Hamiltonian equation \eqref{eq39}, $ H(V_i^*,u_i^*,\omega _i^*)=0$. Then \eqref{T6_1} gives $ X_i^T{Q_i}{X_i} + u_i^T{R_i}{u_i} - \gamma _i^2\omega _i^T{\omega _i} - {\alpha _i}V_i^* + V_{iX}^{*T}\left( {T{X_i} + {B_1}{u_i} + {D_1}{\omega _i} + {E_1}{\upsilon _i}} \right) =  - {({u_i} - u_i^*)^T}{R_i}({u_i} - u_i^*) - {\gamma ^2}{({\omega _i} - \omega _i^*)^T}({\omega _i} - \omega _i^*)$. Now, using the optimal control policy $u_i=u_i^*$ yields
\begin{equation}\label{T6_4}
\begin{gathered}
  X_i^T{Q_i}{X_i} + u_i^T{R_i}{u_i} - \gamma _i^2\omega _i^T{\omega _i} - {\alpha _i}V_i^* \hfill \\
   + V_{iX}^{*T}\left( {T{X_i} + {B_1}{u_i} + {D_1}{\omega _i} + {E_1}{\upsilon _i}} \right) \hfill \\ 
   =  - {\gamma ^2}{({\omega _i} - \omega _i^*)^T}({\omega _i} - \omega _i^*) \leq 0
\end{gathered}
\end{equation}

Multiplying both sides of \eqref{T6_4} by $e^{-\alpha_ it}$, and defining $\dot V_i^* = V_{iX}^{*T}\left( {T{X_i} + {B_1}{u_i} + {D_1}{\omega _i} + {E_1}{\upsilon _i}} \right)$ as the derivative of $V_i^*$ along the trajectories of the closed-loop system, gives \begin{equation}\label{T6_5}
{d}/{{dt}}\left( {{e^{ - {\alpha _i}t}}V_i^*} \right) \leq {e^{ - {\alpha _i}t}}\left( { - X_i^T{Q_i}{X_i} - u_i^T{R_i}{u_i} + \gamma _i^2\omega _i^T{\omega _i}} \right)
\end{equation}

Integrating both sides of \eqref{T6_5} and using the fact that $V_i^*(.)\geq 0$, for every $T>0$ and every $\omega_i \in L_2[0,\infty)$, one has 
\begin{equation}\label{T6_61}
\begin{gathered}
  \int_0^T {{e^{ - {\alpha _i}\tau }}\left( {X_i^T{Q_i}{X_i} + u_i^{*T}{R_i}u_i^*} \right)d\tau }  \hfill \\
   \leq \int_0^T {{e^{ - {\alpha _i}\tau }}\gamma _i^2\omega _i^T{\omega _i}d\tau }  + V_i^*(X(0)) \hfill \\ 
\end{gathered}
\end{equation}

Using the separation principle for the combination of the observer and the controller completes the proof of Part 1 of Problem 1. The proof of Part 2 is similar to \cite{Modares2015Tracking} and, therefore, is omitted.
\bibliographystyle{ieeetr}

\bibliography{reference}

\addtolength{\textheight}{-3cm}

\end{document}